
\documentclass[12pt,onecolumn,english]{IEEEtran}
\usepackage{tablists}
\usepackage{amsmath}
\usepackage{amsthm}
\usepackage{amssymb}
\usepackage{stmaryrd}
\usepackage{tikz}
\usepackage{pgfplots}
\usepackage{subcaption}
	\captionsetup{subrefformat=parens}
\usepackage{enumerate}
\usepackage{blkarray}
\usepackage{hyperref}
\hypersetup{
    colorlinks,
    linktocpage,
    linkcolor=blue,
}
\interdisplaylinepenalty=1000	
\setlength{\parindent}{0.2cm}

\usepackage{romannum}

\newcommand{\markov}{\mathrel\multimap\joinrel\mathrel-\mspace{-9mu}\joinrel\mathrel-}
\newtheorem{theorem}{Theorem}
\newtheorem{definition}{Definition}
\newtheorem{lemma}{Lemma}
\newtheorem{corollary}{Corollary}
\newtheorem{proposition}{Proposition}
\newtheorem{example}{Example} 
\newtheorem{remark}{Remark}


\title{Capacity Results for the K-User Broadcast Channel with Two Nested Multicast Messages}


\author{Mohamed Salman and Mahesh K. Varanasi \\
\thanks{This work will be presented in part at the 2017  International Symposium on Information Theory (ISIT) in Aachen, Germany. 

This research was funded in part by Gift 24868 from Qualcomm Inc,. San Diego, CA, and by NSF Grant 1423657. 

M. Salman and M. K. Varanasi are with the Electrical, Computer and Energy Engineering Department, University of Colorado, Boulder, CO, USA (emails: \{mohamed.salman, varanasi\}@colorado.edu).}
}

\begin{document}

\maketitle
\thispagestyle{empty}
\pagestyle{empty}

\begin{abstract}
The $K$-user discrete memoryless (DM) broadcast channel (BC) with two nested multicast messages is studied in which one common message is to be multicast to all receivers and the second private message to a subset of receivers. The receivers that must decode both messages are referred to as private receivers and the others that must decode only the common message as common receivers. For two nested multicast messages, we establish the capacity region for several classes of DM BCs characterized by the respective associated sets of pair-wise relationships between and among the common and private receivers, each described by the well-known more capable or less noisy conditions.

For three classes of DM BCs, the capacity region is simply achieved by superposition coding and the proofs of the converses rely on a recently found information inequality. 
The achievable rate region is then enhanced through the addition of a splitting of the private message into as many parts as there are common receivers and indirect decoding. A closed-form two-dimensional polyhedral description is obtained for it for a given coding distribution. Through a converse result that relies on the well-known Csiszar sum lemma and the information inequality, a specialization of this region that involves splitting the private message into just two sub-messages is proved to be the capacity region for several classes of DM BCs, beyond those for which superposition coding alone is capacity optimal, thereby underscoring the benefit of rate-splitting. 

All previously known capacity results for DM BCs with two nested multicast messages for the two and three-receiver DM BCs as well as DM BCs with one private or one common receiver are included in the general framework presented in this work.



\end{abstract}

\section{Introduction}

The discrete-memoryless (DM) broadcast channel (BC) is an archetypal communication setting in network information theory where a transmitter is to send messages to $K$ receivers. 
In this work, our focus is on the DM BC with two nested multicast messages, also known as degraded messages, wherein the transmitter must multicast a common message to all receivers, and a private message to a subset of receivers. 
This scenario is relevant in applications such as video and music broadcasting to many receivers which require two different levels of quality. The common message represents the standard quality multicast information that must be decodable by all receivers, while the private multicast message is a refinement to be decoded by a subset of receivers that, aggregated with the common message, represents the high quality message.

In spite of a great deal of attention starting in the 1970's, the capacity region of the BC with two nested multicast messages is known to date only for certain combination of numbers of users and classes of channels. We recall the definition of the well-known orderings between pairs of channels taken from the set of $K$ point-to-point channels associated with the transmitter and each of the $K$ receivers (which in turn can be used to define classes of BCs). Denote the transmitted symbol by the random variable $X$, an auxiliary random variable by $U$, and the received symbols at the $K$ receivers by $Y_1, Y_2 , \cdots , Y_K$, which are also used to refer to the respective receivers. The (conditional) probability mass functions of these random variables are denoted as is standard in the literature.
\begin{definition} \label{Def_degraded} Receiver $Y_c$ is said to be physically degraded from $Y_s$ if $p(y_s,y_c|x){=}p(y_s|x)p(y_c|y_s)$, i.e., $X {\markov} Y_s {\markov} Y_c$ forms a Markov chain. 
\end{definition} 
\begin{definition}  \label{Def_Less_Noisy} \cite[Definition 2]{korner1975source}
Receiver $Y_s$ is less noisy than $Y_c$ if $I(U;Y_s){\geq} I(U;Y_c)$ for all $p(u,x)$. Henceforth, we denote this condition as $Y_s {\succ} Y_c$.
\end{definition}
\begin{definition} \cite[Definition 3]{korner1975source} \label{Def_More_Capable}
Receiver $Y_s$ is more capable than $Y_c$ if $I(X;Y_s){\geq} I(X;Y_c)$ for all $p(x)$. Henceforth, we denote this condition as $ Y_s \succeq Y_c $. 
\end{definition} 
The less noisy condition is strictly less restrictive than the degraded relation (cf. \cite[Counterexample 1]{korner1975source}) and the more capable condition is in turn strictly less restrictive than the less noisy condition. 

When convenient, we refer to the less noisy condition $Y_s {\succ} Y_c$ as $Y_c$ being more noisy than $Y_s$ or denote it equivalently as $Y_c  \prec  Y_s$. Similarly, we may refer to the more capable condition $ Y_s \succeq Y_c $ as $Y_c$ being less capable than $Y_s$ or denote it equivalently as $Y_c \preceq Y_s$.

The two-receiver DM BC with degraded messages was first studied by Korner and Marton in 1977 \cite{kiirner1977general}. Indeed, they found the capacity region for the general case without any restrictions on the channel to be the set of rate pairs ($R_0$ and $R_1$, denoting the rates of the common and private messages, respectively) satisfying
\begin{align}
R_0&\leq I(U;Y_2)\nonumber \\
R_1&\leq I(X;Y_1|U)\nonumber \\
R_0+R_1&\leq I(X;Y_1) \label{Region_KM_2}
\end{align} 
for some $p(u,x)$.
The achievability scheme uses superposition coding and joint decoding. The common message, denoted henceforth as $M_0$, is represented by the auxiliary random variable $U$ with associated cloud center codewords, and the private message, denoted henceforth  as $M_1$, is superposed on $U$ to generate the satellite codewords (according to $p(x|u)$) to be transmitted, denoted by $X$. Receiver $Y_2$ finds $M_0$ by decoding $U$ and $Y_1$ finds ($M_0,M_1$) by decoding $X$. This scheme was shown earlier to achieve the capacity of the degraded DM BC $X{\markov}Y_1{\markov}Y_2$ with private messages by Gallager \cite{gallager1974capacity}. The main contribution in \cite{kiirner1977general} was thus to provide the converse proof using the images-of-a-set technique \cite{korner1977images}. 

More recently, Diggavi and Tse addressed the DM BC with arbitrary $K$ \cite{diggavi2006opportunistic} with two general nested multicast messages, $M_0$ and $M_1$, in which $M_0$ is to be decoded by all receivers and $M_1$ by a subset $\mathcal{S}\subseteq \{1,2,...,K\}$ of private receivers, which is also the setting studied in this paper. Henceforth, we will let $\mathcal{S} = \{1,2,...,L\}$ without loss of generality, so that the first $L$ receivers are private receivers and the last $K-L$ receivers are common receivers.  Using the same encoding scheme as in the 2-user case, and with each private receiver successively decoding the two messages (with common message first), a natural extension of the Korner-Marton region to this more general setting is the set of rate pairs 
\begin{align}
R_0&\leq I(U;Y_i) \enspace i\in \{1,2,...,K\} \nonumber \\
R_1&\leq I(X;Y_s|U) \enspace s\in \mathcal{S}
\label{Region_Ext_Marton}
\end{align}
for some $p(u,x)$. The authors of \cite{diggavi2006opportunistic} proved that the above region is optimal for certain classes of DM BCs respectively in two cases, namely, the single private receiver case of $\mathcal{S}{=}\{1\}$ and the single common receiver setting of  $\mathcal{S}{=}\{1,  \cdots, K-1 \}$. The class of DM BCs for which \eqref{Region_Ext_Marton} was shown to be the capacity region when $\mathcal{S}{=}\{1\}$ is defined by the $K-1$ Markov conditions $X {\markov} Y_1 {\markov} Y_i$ for each $i{\in}\{2,3,..,K\}$  and when $\mathcal{S}=\{1,  \cdots, K{-}1 \}$ by the $K-1$ Markov conditions $X{\markov} Y_i {\markov} Y_K$ for each $i {\in} \{ 1,2,..,K{-}1\}$. In both cases, the capacity regions are thus known for the respective classes of DM BCs defined by degradedness conditions, with each common receiver being a degraded version of each private receiver.

The problem of characterizing capacity for the DM BC with two nested multicast messages for a larger class of DM BCs but for the three-receiver case was addressed by Nair and El Gamal \cite{nair2009capacity} for the above two cases. When $\mathcal{S}=\{1\}$, they showed that the region (\ref{Region_Ext_Marton}) is not in general optimal \cite[Section IV]{nair2009capacity} and proposed an achievability scheme that involves rate splitting with superposition coding and indirect decoding. $M_0$ is represented by $U$ (the cloud centers), a part of $M_1$ is superposed on $U$ to obtain $V$, and the rest of $M_1$ is superposed on $V$ to yield $X$. Receiver $Y_3$ finds $M_0$ by decoding $U$ and $ Y_1$ finds $M_0$  and $M_1$ by decoding $X$ using joint typicality decoding, while $Y_2$ finds $M_0$ {\em indirectly} by decoding $V$. Using this scheme, the authors of \cite{nair2009capacity} established the capacity region for the class of channels constrained by a single less noisy condition $Y_3 \prec  Y_1$, thereby significantly enlarging the class of BCs for which capacity was known for $K{=}3$ earlier due to \cite{diggavi2006opportunistic}.
 
The $K=3$ and $\mathcal{S}=\{1,2\}$ case was also addressed in \cite{nair2009capacity} for which the authors showed therein that the natural extension of the Korner-Marton region given by (\ref{Region_Ext_Marton}) is the capacity region for the class of channels defined by two restrictions, $Y_3 \prec Y_1$ {\em and} $Y_3 \prec Y_2$ \cite[Proposition 11]{nair2009capacity}\footnote{There is a typo in the statement of Proposition 11 of \cite{nair2009capacity}. The conditions stated therein $Y_2 \prec  Y_1 $ and $Y_2 \prec  Y_3 $ should be $Y_3 \prec  Y_2 $ and $Y_3 \prec   Y_1$ (and these are stated correctly in the proof of Proposition 11 \cite{nair2009capacity}).} in thereby strictly expanding the class of DM BCs for which capacity was known from \cite{diggavi2006opportunistic}.
The converse proofs of \cite{nair2009capacity} for the single private and single common receiver cases use mainly the Csiszar sum lemma \cite[Lemma 7]{csiszar1978broadcast}.

The problem of transmitting two nested multicast messages has also been addressed in the more restricted context of combination networks \cite{ngai2004network}. These networks lie at the intersection of the multi-hop wired networks and single-hop BCs and can be considered as an example of deterministic DM BCs. In \cite{bidokhti2016capacity}, Bidokhti {\em et al} showed that a form of linear superposition coding with rate-splitting is optimal for two common receivers and any number of private receivers, i.e, $K=L+2$. When the number of common receivers increases beyond 2 ($K>L+2$), the linear superposition coding with rate splitting is shown to be not optimal anymore. For more than two common receivers, a new achievable rate region, which depends on block Markov encoding scheme, is obtained. This new inner bound is shown to be the capacity region for networks with three (or fewer) common receivers and any number of private receivers. Finally, a new inner bound for the DM BC with two nested multicast messages is proposed by extending the block Markov coding scheme to this case. The work of \cite{bidokhti2016capacity} however does not preclude the possibility that rate-splitting and superposition coding for the DM BC with two nested multicast messages with three common receivers when specialized to the combination network may suffice to obtain the capacity region, obviating the need for the block Markov coding scheme proposed therein.

In \cite{Romero-Varanasi:isit2016} and \cite{Romero-Varanasi:isit2017}, Romero and Varanasi, using notions from order theory and lattices, provide a general framework applicable for the $K$-user DM BC with any subset of the set of exponentially many possible messages, one for each subset of receivers. Therein, the authors obtain general inner bounds for rate-splitting and superposition coding based schemes in split-rate space and show the presence of polymatroidal structure. 

The achievability schemes of this work on two nested multicast messages can be seen as being related to the framework in \cite{Romero-Varanasi:isit2017}. However, the emphasis here is on carefully choosing relatively simple rate-splitting strategies tailored to the two nested multicast messages problem, employing partial interference decoding via non-unique, rather than unique, decoding, and on obtaining explicit two-dimensional descriptions for the achievable rate regions in terms of the rates of the two messages.
Such descriptions are then shown to be amenable to proving converse theorems that establish the capacity region for various classes of DM BCs for general $K$ and $L$. 




\subsection{An Overview of Results}
\label{overview}

In the first main result of this paper, 
we characterize three classes of DM BCs for which the natural extension of the Korner-Marton scheme with superposition coding 
is capacity achieving. The first class is one in which there is a private receiver that is less capable than each of the other private receivers and less noisy than each of the common receivers. The second class is one in which there is a common receiver that is more noisy than all other receivers. The technical contribution in both these cases is the proof of the converse. Since the Csiszar sum lemma has no generalization for more than two receivers, we use the information inequality \cite[Lemma 1]{nair2011capacity} to bypass this problem. This inequality was originally developed in the context of the 3-receiver DM BC with private messages \cite{nair2011capacity}, where it was used to obtain a tight outer bound for the class of 3-receiver less noisy DM BCs defined by $Y_3  \prec  Y_2  \prec  Y_1$. 
Finally, the third class for which the natural extension of the Korner-Marton scheme is shown to be capacity-optimal is one where there is a common receiver that is more noisy than each of the other common receivers and there is a private receiver that is less capable than each of the other private receivers. 

In the special cases of (a) a single private receiver with $\mathcal{S}=\{1 \}$ and (b) single common receiver so that $\mathcal{S}=\{1,  \cdots, K-1 \}$, the first and second classes of DM BCs mentioned above reduce to ones in which (a) $Y_j  \prec Y_1$ for each $j\in \{2,3,..,K\}$ and (b) $Y_K \prec  Y_j$ for each $j\in\{1,2,..,K{-}1\}$, thereby expanding \textemdash by replacing degradedness conditions with the less noisy conditions\textemdash the corresponding classes of DM BCs for which capacity was found in \cite{diggavi2006opportunistic}. In another direction, Case (b) is an extension of the corresponding result obtained in \cite[Proposition 11]{nair2009capacity} from $K=3$ to general $K$. The third class of DM BCs for which capacity is established can be seen as an extension of the capacity result due to Korner-Marton  \cite{kiirner1977general} for $K=2$ and $L=1$ to general $K$ and $L$.

In the second main result of his paper, drawing inspiration from the achievable scheme for the $K=3$ and $L=1$ case in \cite{nair2009capacity}, we propose an inner bound based on rate splitting, superposition coding, and indirect decoding for general $K$ and $L$. In particular, we split the private message $M_{1}$ into $K{-}L$ sub-messages, with one sub-message for every common receiver (except the last one) through which it indirectly decodes the common message and one sub-message to be decoded only by the private receivers. For this rate-splitting strategy of the private message, a closed-form description of polyhedral achievable rate region per coding distribution would require projecting away an indeterminate number of split-rate variables for general $K$. When $K=3$ this projection can be done by hand using Fourier-Motzkin Elimination (FME) \cite{schrijver1986theory} as in \cite{nair2009capacity}. For general $K$, we identify a certain structure in the region's inequalities and use that structure to eliminate the split rates. Consequently, we present a closed-form expression for our inner bound in terms of the original message rates. 

The third main contribution of this work is to consider the specialization of rate-splitting and superposition coding to one that splits the private message into only two sub-messages. One (private) sub-message is to be decoded only by the private receivers and the other (semi-public) sub-message by a subset of the common receivers. Those common receivers decode the common message by indirectly decoding the satellite codeword for the semi-public sub-message, which is superimposed on the cloud codewords for the common message. The rest of the common receivers decode the common message directly from the cloud centers. For this specialization, we obtain a converse result that uses both the Csiszar sum lemma and the information inequality to establish the capacity region for several sub-classes of three or more receiver DM BCs for which it was hitherto unknown. Moreover,  this result includes 
the capacity region for the class of three-receiver DM BCs with the restriction $Y_3 \prec Y_1 $ of  \cite{nair2009capacity}. 




The rest of this paper is organized as follows. In Section \ref{Sec_Def}, we formally state the problem set-up and establish notation. Section \ref{Sec_Main_results} is devoted for the three main results following which the paper is concluded in Section \ref{Sec_Conclusion}. Detailed proofs of the first two results are given in Appendices \ref{Appendix_Proof_Th1}, \ref{Appendix_Proof_general} and \ref{Appendix_proof_FME_structure}. 

\section{Preliminaries And Definitions}
\label{Sec_Def}

We consider a DM BC consisting of one transmitter $X\in \mathcal{X}$, $K$ receivers $Y_i\in \mathcal{Y}_i$, and the channel transition probability $W(y_1...y_K|x)$ where the conditional probability of $n$ channel outputs ($Y_{1j}...Y_{Kj}), \enspace j \in \{1,...,n\}$, conditioned on $n$ channel inputs ($X_{1}...X_n$) is given by 
\begin{equation}
p(y_1^n...y_K^n|x^n)= \prod_{j=1}^nW(y_{1j}...y_{Kj}|x_j)
\end{equation}

As previously stated, we study the problem of groupcasting two nested multicast messages $M_{0}$ and $ M_1$ over a $K$-user DM BC with $M_{0}$ to be decoded by all the receivers while the private message $M_{1}$ by the subset of $L$ private receivers indexed by $\mathcal{S}{=}\{1, 2 , \cdots  , L\}$. The set of indices of the common receivers that need to only decode $M_0$ is denoted as $\mathcal{C}{=}\{L{+}1, \cdots , K\}$.

The messages $M_{0}$ and $M_{1}$ have rates $R_{0}$ and $R_{1}$, respectively. A ($2^{nR_{0}}$, $2^{nR_{1}}$, $n$) code consists of an encoder that assigns a codeword $x^n(m_{0},m_{1})$ for each message pair $(m_{0},m_{1})$ $\in [1:2^{nR_{0}}]\times[1:2^{nR_{1}}]$, and  $K$ decoders with the $j^{th}$ private decoder mapping the received sequences $Y_{j,1}^n$ for each $j \in \mathcal{S}$ into an estimate ($\hat{m}_{0}^{(j)},\hat{m}_{1}^{(j)}$) and the $i^{th}$ common receiver mapping $Y_{i,1}^n$ into an estimate $\hat{m}_{0}^{(i)}$ for each $i \in \mathcal{C}.$ The probability of error $P_e^{(n)}$ is the probability that not all receivers decode their messages correctly. The rate pair ($R_{0}$,$R_{1}$) is said to be achievable if there exists a sequence of ($2^{nR_{0}}$,$2^{nR_{1}}$,$n$) codes with $P_e^{(n)}{\rightarrow} 0$ as $n {\rightarrow} \infty$. The closure of the union of achievable rates is the capacity region.


\section{Main Results}
\label{Sec_Main_results}

This section is organized along the lines of the overview of results in Section \ref{overview}. In Section \ref{sec-general-converse}, we provide three conditions for the optimality of the superposition coding/joint decoding for the general case of the $K$-user DM BC with $L$ private receivers in Theorem \ref{Th_Ext_KM_region}. 
The second main result is a  general inner bound 
based on splitting the private message into $K-L$ sub-messages and superposition coding, and is presented in Section \ref{sec-inner-bound}. In
Section \ref{Simple Rate-Splitting} a simplification of the general inner bound based on splitting the private message into just two sub-messages is presented and shown via a converse result that it is the capacity region of $K-L-1$ classes of DM BCs. 
In Section \ref{geq3rs}, we discuss the technical difficulty of proving a converse for a larger class of channels than for those found in Section \ref{Simple Rate-Splitting} using more than two splits of the private message.



Since the converse proofs in this section use the information inequality of \cite[Lemma 1]{nair2011capacity}, we state it here for easy reference.\\

\begin{lemma}
\label{Lemma_Information_Inequality}
Let $X {\markov} (Y,Z)$ be a DM BC without feedback and $Z \prec  Y$. Consider $M$ to be any random variable such that
$M {\markov}X^n {\markov} (Y^n,Z^n)$ forms a Markov chain. Then, we have \begin{align}
I(Y^{i-1};Z_{i}|M)\geq & I(Z^{i-1};Z_{i}|M) \enspace 1\leq i\leq n \nonumber \\
I(Y^{i-1};Y_{i}|M)\geq & I(Z^{i-1};Y_{i}|M) \enspace 1\leq i\leq n \nonumber
\end{align}
\end{lemma}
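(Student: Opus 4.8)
The plan is to prove both inequalities by a telescoping argument that swaps the coordinates of $Z^{i-1}$ for those of $Y^{i-1}$ one at a time, reducing each swap to a single-letter instance of the less noisy condition $Z \prec Y$. I focus on the first inequality, $I(Y^{i-1};Z_i|M)\geq I(Z^{i-1};Z_i|M)$; the second is identical with $Z_i$ replaced by $Y_i$ throughout.

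First I would introduce the interpolating variables $A_j := (Y^j, Z_{j+1}^{i-1})$ for $0 \leq j \leq i-1$, so that $A_0 = Z^{i-1}$ and $A_{i-1} = Y^{i-1}$. Writing the target difference as the telescoping sum $\sum_{j=1}^{i-1}[I(A_j;Z_i|M) - I(A_{j-1};Z_i|M)]$ and noting that $A_j$ and $A_{j-1}$ share the common coordinates $(Y^{j-1},Z_{j+1}^{i-1})$ and differ only in the $j$-th entry ($Y_j$ versus $Z_j$), the chain rule for mutual information collapses each summand to
\begin{align}
I(Y_j;Z_i|B) - I(Z_j;Z_i|B), \nonumber
\end{align}
where $B := (M, Y^{j-1}, Z_{j+1}^{i-1})$ collects the common coordinates. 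It then suffices to show each such term is nonnegative, i.e. $I(Z_i;Y_j|B) \geq I(Z_i;Z_j|B)$.

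The key step is to recognize this as a single-letter, conditional instance of $Z \prec Y$. Since $j < i$, the variables $M, Y^{j-1}, Z_{j+1}^{i-1}, Z_i$ comprising $(B,Z_i)$ all involve times other than $j$ and are generated from the inputs $X^n$ together with randomness independent of the time-$j$ channel noise. Because $(Y_j,Z_j)$ is produced from $X_j$ by that time-$j$ noise, memorylessness yields the conditional independence $(Y_j,Z_j)\markov X_j\markov(B,Z_i)$, so in particular $(Y_j,Z_j)\perp Z_i \mid (X_j, B)$. Conditioning on $B = b$, the chain $Z_i \markov X_j \markov (Y_j, Z_j)$ is then a valid single use of the channel $W$ with auxiliary $U = Z_i$ and input law $p(x_j \mid b)$; Definition \ref{Def_Less_Noisy} therefore gives $I(Z_i;Y_j\mid B=b)\geq I(Z_i;Z_j\mid B=b)$, and averaging over $b$ closes the argument. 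The second inequality follows verbatim upon replacing the auxiliary $Z_i$ by $Y_i$.

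I expect the main obstacle to be the justification of the Markov chain $(Y_j,Z_j)\markov X_j\markov(B,Z_i)$. Since the codeword symbols $X_1,\ldots,X_n$ are statistically dependent \textemdash\ they are jointly determined by the messages through the encoder \textemdash\ conditioning on $X_j$ alone does not visibly decouple time $j$ from the remaining times. The resolution is to attribute the time-$j$ outputs to channel noise that is independent of $X^n$ and of the randomness at all other times; fixing $X_j$ then renders $(Y_j,Z_j)$ independent of every variable built from the other coordinates, regardless of the dependence among the inputs. Once this conditional independence is secured, the reduction to the single-letter less noisy property and the averaging over $b$ are routine.
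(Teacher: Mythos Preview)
The paper does not supply its own proof of this lemma; it merely states the result and cites it as \cite[Lemma 1]{nair2011capacity}. There is therefore nothing in the present paper to compare your argument against.

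That said, your proof is correct and is essentially the standard argument from the cited source. The telescoping decomposition via the interpolating sequences $A_j=(Y^j,Z_{j+1}^{i-1})$ is exactly how this inequality is proved: one swaps $Z$-coordinates for $Y$-coordinates one at a time and reduces each swap to a single-letter application of the less noisy condition. Your justification of the key Markov chain $(Y_j,Z_j)\markov X_j\markov (B,Z_i)$ is also right; it follows immediately from the factorization $p(m,x^n,y^n,z^n)=p(m,x^n)\prod_i W(y_i,z_i|x_i)$, which gives $p(y_j,z_j\mid x_j,m,x^{n\setminus j},y^{n\setminus j},z^{n\setminus j})=W(y_j,z_j|x_j)$. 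Since $B=(M,Y^{j-1},Z_{j+1}^{i-1})$ and $Z_i$ (with $i>j$) are functions of $(M,Y^{n\setminus j},Z^{n\setminus j})$, the conditional channel law at time $j$ is unchanged by conditioning on $(B,Z_i)$, and the less noisy property applies pointwise in $B=b$ with auxiliary $U=Z_i$. Averaging over $b$ finishes the step, and the second inequality is indeed identical with $Y_i$ in place of $Z_i$.
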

.\\



\subsection{Superposition Coding}
\label{sec-general-converse}

In the next theorem, we state the three classes of $K$-receiver DM BCs for which superposition coding alone is capacity-optimal for general two nested messages. 

\begin{theorem}\label{Th_Ext_KM_region}
For the $K$-receiver DM BC with two general nested multicast messages and $L$ private receivers the set of rate pairs ($R_0,R_1$) satisfying 
\begin{align}
R_0&\leq I(U;Y_c)  \; \forall c\in \mathcal{C} \nonumber \\
R_1&\leq I(X;Y_s|U) \; \forall s\in \mathcal{S} \nonumber \\
R_0{+}R_1&\leq I(X;Y_s) \; \forall s\in \mathcal{S}
\label{3rdineqthm1}
\end{align} for some $p(u,x)$ is the capacity region for the following cases:
\begin{itemize}
\item[(i)] $\exists $ a $r \in \mathcal{S}$ such that $Y_r \preceq Y_s$ $ \forall s\in \mathcal{S}\backslash \{r\}$ and $Y_c \prec  Y_r$ $ \forall c\in \mathcal{C}$.
\item[(ii)] $\exists $ a $ j \in \mathcal{C}$ such that $Y_j \prec  Y_c$ $ \forall c \in \mathcal{C}\backslash \{j\}$ and $Y_j \prec  Y_s$  $ \forall s \in \mathcal{S}$.
\item[(iii)] $\exists$ a $j \in \mathcal{C}$ such that $Y_j \prec Y_c \enspace \forall c  \in \mathcal{C}\backslash \{j\}$ and $ \exists r \in \mathcal{S}$ such that $Y_r \preceq Y_s$ $ \forall s  {\in} \mathcal{S}\backslash \{r\} $
\end{itemize}
\end{theorem}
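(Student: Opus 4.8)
\medskip

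The plan is to establish achievability once, for all three cases simultaneously, and then to prove three separate converses, each organized around a single per-letter auxiliary $U$ extracted from the code and each invoking the ordering hypotheses of its case together with Lemma~\ref{Lemma_Information_Inequality}. For achievability I would use the natural $K$-user extension of Korner-Marton superposition coding: cloud centers $U^n(m_0)\sim\prod_i p(u_i)$ carry $M_0$ and satellites $X^n(m_0,m_1)\sim\prod_i p(x_i|u_i)$ carry $M_1$. Each common receiver $Y_c$ decodes $m_0$ from the cloud centers (success if $R_0<I(U;Y_c)$), while each private receiver $Y_s$ jointly decodes $(m_0,m_1)$ from the satellites, whose error events yield exactly $R_1<I(X;Y_s|U)$ and $R_0+R_1<I(X;Y_s)$. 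Taking the union over $p(u,x)$ gives the region \eqref{3rdineqthm1}. No ordering hypothesis is used here; the orderings enter only through the converses.

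For every converse I would begin with Fano's inequality, obtaining $nR_0\le I(M_0;Y_c^n)+n\epsilon_n$ for $c\in\mathcal C$ and $nR_1\le I(M_1;Y_s^n|M_0)+n\epsilon_n$, $n(R_0{+}R_1)\le I(X^n;Y_s^n)+n\epsilon_n$ for $s\in\mathcal S$, and then a uniform time-sharing variable $Q$ with $U=(U_Q,Q)$ and $X=X_Q$. The sum-rate bound $R_0+R_1\le I(X;Y_s)$ drops out of memorylessness for each $s$ separately and needs no ordering. The content lies in making one auxiliary satisfy both the family $R_0\le I(U;Y_c)$ and the family $R_1\le I(X;Y_s|U)$.

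In case~(ii) I would anchor $U_i=(M_0,Y_j^{i-1})$ to the universally more-noisy common receiver $Y_j$. Then $R_0\le I(U;Y_j)$ is immediate, and $R_0\le I(U;Y_c)$ for the other common receivers follows from $Y_j\prec Y_c$ because the chain $U_i\markov X_i\markov(Y_{j,i},Y_{c,i})$ holds letter-wise. For $R_1\le I(X;Y_s|U)$ I would single-letterize $I(M_1;Y_s^n|M_0)$ into $\sum_i I(X_i;Y_{s,i}|M_0,Y_s^{i-1})$ and replace $Y_s^{i-1}$ by $Y_j^{i-1}$ in the conditioning; the resulting gap $\sum_i[\,I(Y_j^{i-1};Y_{s,i}|M_0)-I(Y_s^{i-1};Y_{s,i}|M_0)\,]$ is nonpositive by the second inequality of Lemma~\ref{Lemma_Information_Inequality} with $Y_j\prec Y_s$. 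Case~(i) is the mirror image, anchored at $U_i=(M_0,Y_r^{i-1})$ on the least-capable private receiver $Y_r$: the bound $R_1\le I(X;Y_r|U)$ is the direct Korner-Marton-type step, $R_1\le I(X;Y_s|U)$ for the other private receivers follows because $Y_r\preceq Y_s$ holds for every input distribution and so survives conditioning on $U$, and $R_0\le I(U;Y_c)$ is obtained by converting $Y_c^{i-1}$ to $Y_r^{i-1}$ using the first inequality of Lemma~\ref{Lemma_Information_Inequality} with $Y_c\prec Y_r$, which gives exactly $I(Y_r^{i-1};Y_{c,i}|M_0)\ge I(Y_c^{i-1};Y_{c,i}|M_0)$.

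The hard part is case~(iii), where $Y_j$ and $Y_r$ are ordered only inside $\mathcal C$ and inside $\mathcal S$, with no relation between them, so neither single anchor works: anchoring at $Y_j$ breaks the $R_1$ bound at $Y_r$ and anchoring at $Y_r$ breaks the $R_0$ bound at $Y_j$, since either repair would require the unavailable relation $Y_j\prec Y_r$. My plan is to recognize the three core inequalities $R_0\le I(U;Y_j)$, $R_1\le I(X;Y_r|U)$, $R_0+R_1\le I(X;Y_r)$ as exactly the Korner-Marton region of the two-receiver sub-BC with marginal $p(y_r,y_j|x)$, common receiver $Y_j$ and private receiver $Y_r$; because $Y_j$ decodes $M_0$ and $Y_r$ decodes $(M_0,M_1)$, I can import the classical Korner-Marton converse \cite{kiirner1977general} --- which rests on the Csiszar sum lemma and assumes no ordering between the two receivers --- to produce a single $U$ meeting all three. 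I would then extend $R_0\le I(U;Y_j)$ to all $c\in\mathcal C$ via $Y_j\prec Y_c$ and extend $R_1\le I(X;Y_r|U)$, $R_0+R_1\le I(X;Y_r)$ to all $s\in\mathcal S$ via $Y_r\preceq Y_s$. I expect the delicate points to be checking that the auxiliary furnished by the Korner-Marton converse still obeys the per-letter Markov chain $U_i\markov X_i\markov(Y_{1,i},\dots,Y_{K,i})$ required to apply the less-noisy and more-capable conditions with that same $U$, and confirming that the two orderings are jointly compatible with one coding distribution; this simultaneous compatibility, rather than either ordering alone, is what makes case~(iii) the crux.
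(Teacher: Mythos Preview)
Your proposal is correct and matches the paper's approach: superposition coding for achievability, Lemma~\ref{Lemma_Information_Inequality} with anchors $U_i=(M_0,Y_r^{i-1})$ and $U_i=(M_0,Y_j^{i-1})$ for the converses in cases~(i) and~(ii), and the two-receiver Korner--Marton converse (via the Csisz\'ar sum lemma) for case~(iii). One simplification worth noting: the paper treats case~(iii) as the \emph{easiest} rather than the hardest --- under those hypotheses the region~\eqref{3rdineqthm1} collapses to the two-receiver Korner--Marton region for the pair $(Y_r,Y_j)$, and the capacity of that two-receiver sub-BC is automatically an outer bound for the full $K$-receiver problem (removing receivers can only enlarge capacity), so your per-letter Markov-chain check and extension-via-orderings step, while correct, are unnecessary.
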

\begin{proof}
The region in \eqref{3rdineqthm1} is the extension of the two-receiver Korner-Marton region in \eqref{Region_KM_2} to the $K$-receiver DM BC when all private users use joint decoding.

In Case (i), when $Y_r \preceq Y_s$ for $s \in \mathcal{S}\backslash \{r\} $, the last two sets of inequalities of \eqref{3rdineqthm1} become $R_1{\leq} I(X;Y_r|U)$ and  $R_0{+}R_1{\leq} I(X;Y_r)$. Moreover, since $Y_c \prec  Y_r$, the latter inequality is redundant. Hence, the region defined by \eqref{3rdineqthm1} becomes the set of rate pairs ($R_0,R_1$) satisfying 
\begin{align}
R_0&\leq I(U;Y_c)  \; \forall c\in \mathcal{C} \nonumber \\
R_1&\leq I(X;Y_r|U)  \label{2ndineqthm1i} 
\end{align}
for some $p(u,x)$. The converse proof for the above two inequalities is given in Appendix \ref{Appendix_Proof_Th1}. It is inspired from the converse proofs in \cite{nair2012three} where the information inequality plays a crucial role. To the reader familiar with that inequality it is clear that the optimal choice of the auxiliary random variable would be $U_i{=}M_0,Y_{r,1}^{i-1}$. It does not involve $Y_{c}^n$, and hence, the proof holds for each $c\in \mathcal{C}$.

In Case (ii), when $Y_j \prec  Y_c$ for all $c \in \mathcal{C} \backslash\{j\}$, the first set of inequalities in \eqref{3rdineqthm1} become $R_0\leq I(U;Y_j)$. Moreover, since $Y_j \prec  Y_s$ for all $s \in \mathcal{S}$, the last set of inequalities in \eqref{3rdineqthm1} becomes redundant.
Hence, the rate region in \eqref{3rdineqthm1} can be written as the set of rate pairs ($R_0,R_1$) satisfying 
\begin{align}
R_0&\leq I(U;Y_j)  \;  \nonumber \\
R_1&\leq I(X;Y_s|U) \; \forall s\in \mathcal{S} \label{2ndineqthm1ii}
\end{align}
for some $p(u,x)$. The converse for these two inequalities are also presented in Appendix \ref{Appendix_Proof_Th1}, where the optimal choice of the auxiliary random variable is $U_i{=}M_0,Y_{j,1}^{i-1}$, which is not a function of $Y_s^n$. Note that the information inequality is used in this converse proof as well.

Lastly, for the class of channels in Case (iii), the achievable region defined by \eqref{3rdineqthm1} becomes the set of rate pairs ($R_0,R_1$) satisfying\footnote{For Case (iii) joint decoding is needed at the private receivers whereas for Cases (i) and (ii) successive decoding suffices.}
\begin{align}
R_0&\leq I(U;Y_j)  \;  \nonumber \\
R_1&\leq I(X;Y_r|U) \;  \nonumber \\
R_0{+}R_1&\leq I(X;Y_r) \; \label{3rdineqthm1iii}
\end{align} for some $p(u,x)$, which is equivalent to the two-receiver Korner-Marton region in \eqref{Region_KM_2}. In particular, it is the capacity region for DM BC when only receivers $Y_j$ and $Y_r$ are present. 
Next, we note that the capacity region of the two-receiver DM BC with receivers $Y_j$ and $Y_r$ is an outer bound for the $K$-receiver DM BC under consideration since the addition of the other receivers cannot enlarge the capacity region. Hence, since the region \eqref{3rdineqthm1iii} is also achievable (with superposition coding), it is the capacity region for the $K$-receiver DM BC. Note that the converse proof for the two-receiver case in \cite{kiirner1977general} uses the Csiszar sum lemma, i.e., the optimal choice of the auxiliary random variable is $U_i{=}M_0,Y_{j,1}^{i-1}Y_{r,i+1}^n$. 

\end{proof}

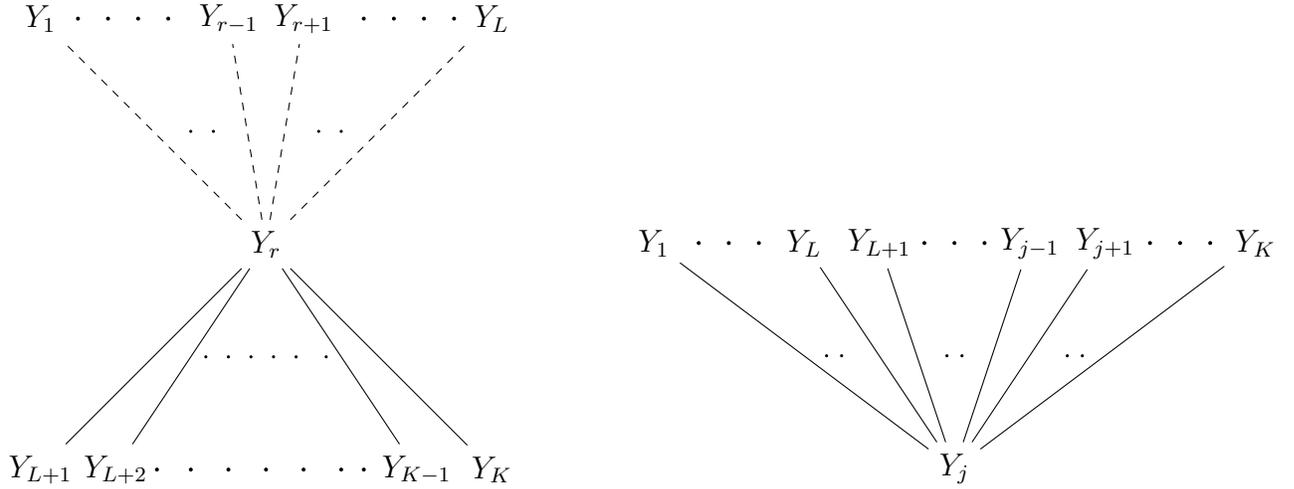
\begin{figure*}[t!]
    \centering
    \begin{subfigure}[t]{0.5\textwidth}
        \centering
\begin{tikzpicture}
 \node (y1) at (-3,6)  {$Y_1$};
  \node (yj_1) at (-0.5,6)  {$Y_{r-1}$};
  \node (yj+1) at (0.5,6)  {$Y_{r+1}$};
  \node (yL) at (3,6)  {$Y_{L}$};
 \node (yj) at (0,3)  {$Y_r$};
  \node (yL_1) at (-3,0)  {$Y_{L+1}$};
  \node (yL_2) at (-2,0)  {$Y_{L+2}$};
  \node (yK_1) at (2,0)  {$Y_{K-1}$};
  \node (yK) at (3,0)  {$Y_K$};
 
 \draw [-] (yj) -- (yL_1);
 \draw [-] (yj) -- (yL_2);
\draw [-] (yj) -- (yK);
 \draw [-] (yj) -- (yK_1);
 
 \draw [dashed] (yj) -- (y1);
 \draw [dashed] (yj) -- (yj_1);
\draw [dashed] (yj) -- (yj+1);
 \draw [dashed] (yj) -- (yL);

\draw[black,fill=black] (-1.45,0) circle (.1ex);
\draw[black,fill=black] (-1,0) circle (.1ex);
\draw[black,fill=black] (-0.5,0) circle (.1ex);
\draw[black,fill=black] (0,0) circle (.1ex);
\draw[black,fill=black] (1,0) circle (.1ex);
\draw[black,fill=black] (1.3,0) circle (.1ex);
\draw[black,fill=black] (0.5,0) circle (.1ex);

\draw[black,fill=black] (-0.8,1.5) circle (.07ex);
\draw[black,fill=black] (-0.5,1.5) circle (.07ex);
\draw[black,fill=black] (-0.2,1.5) circle (.07ex);
\draw[black,fill=black] (0.1,1.5) circle (.07ex);
\draw[black,fill=black] (.4,1.5) circle (.07ex);
\draw[black,fill=black] (0.8,1.5) circle (.07ex);


\draw[black,fill=black] (-2.5,6) circle (.1ex);
\draw[black,fill=black] (-2.1,6) circle (.1ex);
\draw[black,fill=black] (-1.7,6) circle (.1ex);
\draw[black,fill=black] (-1.3,6) circle (.1ex);
\draw[black,fill=black] (1.3,6) circle (.1ex);
\draw[black,fill=black] (1.7,6) circle (.1ex);
\draw[black,fill=black] (2.1,6) circle (.1ex);
\draw[black,fill=black] (2.5,6) circle (.1ex);

\draw[black,fill=black] (-1,4.5) circle (.07ex);
\draw[black,fill=black] (-0.7,4.5) circle (.07ex);
\draw[black,fill=black] (1,4.5) circle (.07ex);
\draw[black,fill=black] (0.7,4.5) circle (.07ex);

\end{tikzpicture}
        \caption{The class of channels of Case (i) of Theorem \ref{Th_Ext_KM_region}: $\exists $ a $r \in \mathcal{S}$ such that $Y_r \preceq Y_s$ $ \forall s\in \mathcal{S}\backslash \{r\}$ and $Y_c \prec  Y_r$ $ \forall c\in \mathcal{C}$.}
   \label{Fig:Classes_Th1_i}
   \end{subfigure}%
    ~ 
    \begin{subfigure}[t]{0.5\textwidth}
        \centering
        \begin{tikzpicture}
  \node (yj) at (0,1.5)  {$Y_j$};
  \node (y1) at (-4,4.5)  {$Y_1$};
  \node (yL) at (-2,4.5)  {$Y_L$};
  \node (yL+1) at (-1,4.5)  {$Y_{L+1}$};
  \node (yj-1) at (1,4.5)  {$Y_{j-1}$};
  \node (yj+1) at (2,4.5)  {$Y_{j+1}$};
  \node (yK_2) at (4,4.5)  {$Y_K$};

\draw [-] (yj) -- (y1);
\draw [-] (yj) -- (yL);
\draw [-] (yj) -- (yL+1);
\draw [-] (yj) -- (yj-1);
\draw [-] (yj) -- (yj+1);
\draw [-] (yj) -- (yK_2);

\draw[black,fill=black] (-3.4,4.5) circle (.1ex);
\draw[black,fill=black] (-3,4.5) circle (.1ex);
\draw[black,fill=black] (-2.6,4.5) circle (.1ex);

\draw[black,fill=black] (0.4,4.5) circle (.1ex);
\draw[black,fill=black] (0,4.5) circle (.1ex);
\draw[black,fill=black] (-0.4,4.5) circle (.1ex);

\draw[black,fill=black] (3.4,4.5) circle (.1ex);
\draw[black,fill=black] (3,4.5) circle (.1ex);
\draw[black,fill=black] (2.6,4.5) circle (.1ex);

\draw[black,fill=black] (-1.7,3) circle (.07ex);
\draw[black,fill=black] (-1.5,3) circle (.07ex);

\draw[black,fill=black] (-0.1,3) circle (.07ex);
\draw[black,fill=black] (0.1,3) circle (.07ex);

\draw[black,fill=black] (1.7,3) circle (.07ex);
\draw[black,fill=black] (1.5,3) circle (.07ex);

\end{tikzpicture}

        \caption{The class of channel of Case (ii) of Theorem \ref{Th_Ext_KM_region}: $\exists $ a $ j \in \mathcal{C}$ such that $Y_j \prec  Y_c$ $ \forall c \in \mathcal{C}\backslash \{j\}$ and $Y_j \prec  Y_s$  $ \forall s \in \mathcal{S}$}
   \label{Fig:Classes_Th1_ii}
   \end{subfigure}
\caption{The solid line between any two receivers indicates that the upper receiver is less noisy than the lower one while the dashed line between any two receivers indicates that the upper receiver is more capable than the lower one. If there is no line between any two receivers, there is no order assumed between them.} 
\label{Fig:Classes_Th1_i_ii}
\end{figure*}

\begin{figure*}[t!]
 \centering
\begin{tikzpicture}
 \node (y1) at (-3,6)  {$Y_1$};
  \node (yr_1) at (-0.5,6)  {$Y_{r-1}$};
  \node (yr+1) at (0.5,6)  {$Y_{r+1}$};
  \node (yL) at (3,6)  {$Y_{L}$};
 
 \node (yL+1) at (5,6)  {$Y_{L+1}$};
  \node (yj-1) at (7.5,6)  {$Y_{j-1}$};
  \node (yj+1) at (8.5,6)  {$Y_{j+1}$};
  \node (yK) at (11,6)  {$Y_{K}$};
\node (yj) at (8,3)  {$Y_{j}$};

 \node (yr) at (0,3)  {$Y_r$};
 
 \draw [dashed] (yr) -- (y1);
 \draw [dashed] (yr) -- (yr_1);
\draw [dashed] (yr) -- (yr+1);
 \draw [dashed] (yr) -- (yL);

\draw [-] (yj) -- (yL+1);
 \draw [-] (yj) -- (yj-1);
\draw [-] (yj) -- (yj+1);
\draw [-] (yj) -- (yK);

\draw[black,fill=black] (-2.5,6) circle (.1ex);
\draw[black,fill=black] (-2.1,6) circle (.1ex);
\draw[black,fill=black] (-1.7,6) circle (.1ex);
\draw[black,fill=black] (-1.3,6) circle (.1ex);
\draw[black,fill=black] (1.3,6) circle (.1ex);
\draw[black,fill=black] (1.7,6) circle (.1ex);
\draw[black,fill=black] (2.1,6) circle (.1ex);
\draw[black,fill=black] (2.5,6) circle (.1ex);

\draw[black,fill=black] (-1,4.5) circle (.07ex);
\draw[black,fill=black] (-0.7,4.5) circle (.07ex);
\draw[black,fill=black] (1,4.5) circle (.07ex);
\draw[black,fill=black] (0.7,4.5) circle (.07ex);

\draw[black,fill=black] (5.6,6) circle (.1ex);
\draw[black,fill=black] (5.9,6) circle (.1ex);
\draw[black,fill=black] (6.3,6) circle (.1ex);
\draw[black,fill=black] (6.7,6) circle (.1ex);
\draw[black,fill=black] (9.3,6) circle (.1ex);
\draw[black,fill=black] (9.7,6) circle (.1ex);
\draw[black,fill=black] (10.1,6) circle (.1ex);
\draw[black,fill=black] (10.5,6) circle (.1ex);

\draw[black,fill=black] (7,4.5) circle (.07ex);
\draw[black,fill=black] (7.3,4.5) circle (.07ex);
\draw[black,fill=black] (9,4.5) circle (.07ex);
\draw[black,fill=black] (8.7,4.5) circle (.07ex);

\end{tikzpicture}
        \caption{ The class of channels of Case (iii) of Theorem \ref{Th_Ext_KM_region}: $ \exists r \in \mathcal{S}$ such that $Y_r \preceq Y_s$ $ \forall s  {\in} \mathcal{S}\backslash \{r\} $ (the left side of the figure) and $\exists$ a $j \in \mathcal{C}$ such that $Y_j \prec Y_c \enspace \forall c  \in \mathcal{C}\backslash \{j\}$ (the right side of the figure).}
   \label{Fig:Classes_Th1iii}
   \end{figure*}
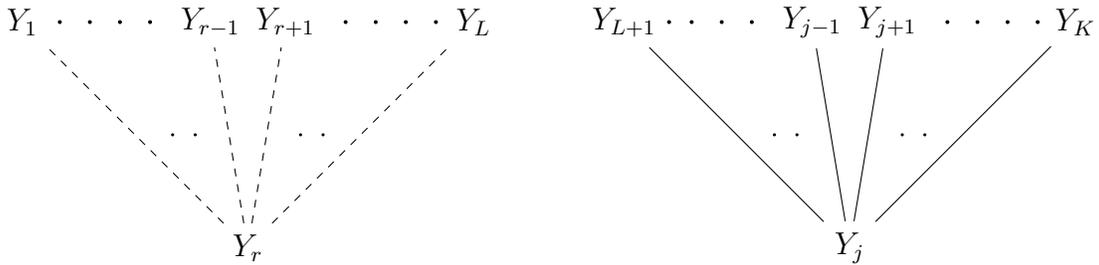

In Figs. \ref{Fig:Classes_Th1_i_ii} and \ref{Fig:Classes_Th1iii}, we illustrate the classes of channels considered in Theorem \ref{Th_Ext_KM_region} with a Hasse diagram each, that allows for a quick visualization of each class. In Fig. \ref{Fig:Classes_Th1_i}, the conditions of Case (i) are depicted: among the private receivers, there is one which is pair-wise less capable (shown with dashed lines) than all other private receivers and less noisy (shown with solid lines) than all the common receivers. In Fig. \ref{Fig:Classes_Th1_ii} the conditions of Case (ii) are shown: there is one common receiver that is pair-wise more noisy than all the other common receivers and all private receivers. Fig. \ref{Fig:Classes_Th1iii} shows the class of channels in Case (iii): there is one common receiver that is pair-wise more noisy than all the other common receivers and there is one private receiver which is pair-wise less capable than all other private receivers.

\begin{remark}
Although we are considering the K-receiver DM BC, the obtained capacity regions for the three cases of Theorem \ref{Th_Ext_KM_region} depend only on the channels of a subset of the receivers. For instance, in Case (i), the region depends on only $K-L+1$ receivers, while in Case (ii) it depends on $L+1$ receivers, and in Case (iii) it depends on only two receivers. In particular, the addition of any number of less noisy common receivers $Y_c$ (with $Y_j \prec Y_c$) in Cases (ii) and (iii) and/or any number of more capable private receivers $Y_s$ (with $Y_r \preceq Y_s$) in Cases (i) and (iii) does not change the capacity region from what it was before adding these receivers. Note that in Case (iii) of channels there is no order restriction across the common and private receivers, unlike in Cases (i) and (ii), i.e., restrictions are only needed among the common receivers and among the private receivers.
\end{remark}


\begin{remark}
Theorem \ref{Th_Ext_KM_region} includes several results known in the literature. These are listed below.
\begin{itemize}
\item Consider Case (iii) of Theorem \ref{Th_Ext_KM_region} for $K=2$ and $L=1$. In this case, there is no restriction on the channel and the capacity region is given as \eqref{3rdineqthm1iii} (with $r=1$ and $j=2$) which is the Korner-Marton capacity region for the two-receiver case.
\item  Consider general $K$ but $L=1$. This case was addressed in \cite{diggavi2006opportunistic} where optimality was shown for the class of DM BCs under the degraded order restrictions $X{\markov} Y_1 {\markov} Y_c, \forall c \in \mathcal{C}$. Case (i) of Theorem \ref{Th_Ext_KM_region} gives the capacity region for the larger class of channels, $Y_c\prec Y_1\; \forall c\in \mathcal{C}$. This class if depicted in Fig \ref{Fig:Classes_for_L_1_}.
\item Consider general $K$ but $L=K-1$. This case was addressed in \cite{diggavi2006opportunistic} where optimality was shown for the class of DM BCs under the degraded order restrictions $X{\markov} Y_s {\markov} Y_K, \forall s \in \mathcal{S}$. Case (ii) of Theorem \ref{Th_Ext_KM_region} expands this result to the class of channels $Y_K\prec Y_s, \forall s \in \mathcal{S}$, shown in Fig \ref{Fig:Classes_for_L_K_1}. 

\item Case (ii) of Theorem \ref{Th_Ext_KM_region} is a generalization to general $K$ and $L$ of the result in \cite[Proposition 11]{nair2009capacity} for $K=3$ and $L=2$ which obtained capacity for DM BCs defined by the two restrictions $ Y_3 \prec Y_1 $ and $Y_3 \prec Y_2$.
Note also that the converse proof of Case (ii) of Theorem \ref{Th_Ext_KM_region} mainly depends on the information inequality, while that in \cite[Proposition 11]{nair2009capacity} mainly depends on the Csiszar sum lemma.


\end{itemize}
\end{remark}

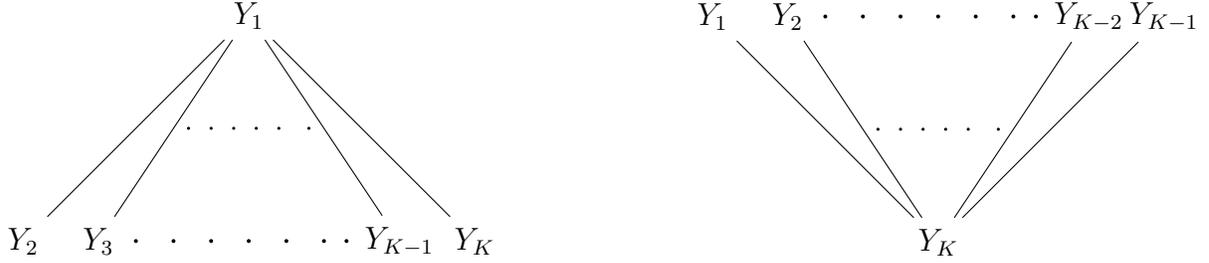
\begin{figure*}[t!]
    \centering
    \begin{subfigure}[t]{0.5\textwidth}
        \centering
\begin{tikzpicture}
 \node (y1) at (0,3)  {$Y_1$};
  \node (y2) at (-3,0)  {$Y_2$};
  \node (y3) at (-2,0)  {$Y_3$};
  \node (yK_1) at (2,0)  {$Y_{K-1}$};
  \node (yK) at (3,0)  {$Y_K$};
     \draw [-] (y1) -- (y2);
    \draw [-] (y1) -- (yK);
\draw [-] (y1) -- (y3);
    \draw [-] (y1) -- (yK_1);
\draw[black,fill=black] (-1.5,0) circle (.1ex);
\draw[black,fill=black] (-1,0) circle (.1ex);
\draw[black,fill=black] (-0.5,0) circle (.1ex);
\draw[black,fill=black] (0,0) circle (.1ex);
\draw[black,fill=black] (1,0) circle (.1ex);
\draw[black,fill=black] (1.3,0) circle (.1ex);
\draw[black,fill=black] (0.5,0) circle (.1ex);

\draw[black,fill=black] (-0.8,1.5) circle (.07ex);
\draw[black,fill=black] (-0.5,1.5) circle (.07ex);
\draw[black,fill=black] (-0.2,1.5) circle (.07ex);
\draw[black,fill=black] (0.1,1.5) circle (.07ex);
\draw[black,fill=black] (.4,1.5) circle (.07ex);
\draw[black,fill=black] (0.8,1.5) circle (.07ex);

\end{tikzpicture}
        \caption{The class of channel of Case (i) of Theorem \ref{Th_Ext_KM_region} with $L=1$ and, without loss of generality, letting $j=1$ ($Y_c\prec Y_1$ for all $c\in \{2,3,\cdots,K\}$)}
   \label{Fig:Classes_for_L_1_}
   \end{subfigure}%
    ~ 
    \begin{subfigure}[t]{0.5\textwidth}
        \centering
        \begin{tikzpicture}
 \node (yK) at (0,0)  {$Y_K$};
  \node (y1) at (-3,3)  {$Y_1$};
  \node (y2) at (-2,3)  {$Y_2$};
  \node (yK_2) at (2,3)  {$Y_{K-2}$};
  \node (yK_1) at (3,3)  {$Y_{K-1}$};

\draw [-] (yK) -- (y2);
\draw [-] (yK) -- (y1);
\draw [-] (yK) -- (yK_2);
\draw [-] (yK) -- (yK_1);

\draw[black,fill=black] (-1.5,3) circle (.1ex);
\draw[black,fill=black] (-1,3) circle (.1ex);
\draw[black,fill=black] (-0.5,3) circle (.1ex);
\draw[black,fill=black] (0,3) circle (.1ex);
\draw[black,fill=black] (1,3) circle (.1ex);
\draw[black,fill=black] (1.3,3) circle (.1ex);
\draw[black,fill=black] (0.5,3) circle (.1ex);

\draw[black,fill=black] (-0.8,1.5) circle (.07ex);
\draw[black,fill=black] (-0.5,1.5) circle (.07ex);
\draw[black,fill=black] (-0.2,1.5) circle (.07ex);
\draw[black,fill=black] (0.1,1.5) circle (.07ex);
\draw[black,fill=black] (.4,1.5) circle (.07ex);
\draw[black,fill=black] (0.8,1.5) circle (.07ex);

\end{tikzpicture}

        \caption{The class of channel of Case (ii) of Theorem \ref{Th_Ext_KM_region} with $L=K-1$ and letting $r=K$ ($Y_K\prec Y_s$ for all $s\in \{1,2,\cdots,K-1\}$)}
   \label{Fig:Classes_for_L_K_1}
   \end{subfigure}
\caption{The special cases of Theorem 1 for $L=1$ and $L=K-1$.} 
\label{Fig:Classes_for_L_1_and_L_K_1}
\end{figure*}
 
In \cite{nair2009capacity}, the capacity region was established for $K=3$ and $L=1$ for the class of channels $Y_2 \prec  Y_1$ 
(or $Y_3 \prec  Y_1$). 
However, Case (i) of Theorem \ref{Th_Ext_KM_region} gives the capacity for a strictly more restrictive class of DM BCs defined by $Y_2 \prec  Y_1$ {\em and} $Y_3 \prec  Y_1$. The reason for this is that the achievable scheme in \cite{nair2009capacity} is more complex in that it involves rate-splitting and indirect decoding, which brings us to the next section. In particular, we will generalize the scheme of \cite{nair2009capacity} to general $K$ and $L$ next.

\subsection{Rate-Splitting and Superposition Coding: A New Inner Bound}
\label{sec-inner-bound}
We expand our study of the capacity regions of classes of $K$-user DM BCs with two nested multicast messages for general $L$ by incorporating rate splitting before superposition coding.

\begin{theorem}
\label{TH_genral_case}
An inner bound of $K$-user DM BC for two nested multicast messages ($M_{0}$,$M_{1}$) with $M_0$ to be decoded by all receivers and $M_1$ to be decoded by the first $L$ private receivers is the set of rate pairs ($R_{0},R_{1}$) satisfying
\begin{align}
\label{Region_Genral_Case}
R_{0} \leq & I(U_{c};Y_c)  \enspace \forall c\in \mathcal{C} \nonumber \\
R_1\leq & I(X;Y_s|U_K)  \enspace \forall s\in \mathcal{S} \nonumber \\
R_{0}{+}R_{1} \leq & I(X;Y_s|U_{c}){+}I(U_{c};Y_c)  \enspace  \forall s\in \mathcal{S} \enspace \forall c\in \mathcal{C} \backslash \{K\} \nonumber \\
R_{0}{+}R_{1} \leq & I(X;Y_s) \enspace  \forall s\in \mathcal{S}
\end{align}
for some $(U_{K},U_{K{-}1},...,U_{L{+}1},X)$ such that $U_{K} {\markov}$ $U_{K{-}1} {\markov}$ $U_{K{-}2} ... {\markov}$ $U_{L{+}1}{\markov} X$ forms a Markov chain. This inner bound is the capacity region for the DM BCs of Cases (i)-(iii) of Theorem \ref{Th_Ext_KM_region}.
\end{theorem}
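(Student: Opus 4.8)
The plan is to establish the two halves of the claim in turn: first that the region \eqref{Region_Genral_Case} is achievable for an arbitrary $K$-user DM BC obeying the stated Markov constraint, and then that for the channel classes of Cases (i)--(iii) it collapses onto the capacity region already identified in Theorem \ref{Th_Ext_KM_region}. The second half is the short direction, and I would handle it by a sandwiching argument. Setting $U_K{=}U_{K{-}1}{=}\cdots{=}U_{L+1}{=}U$ (equivalently, assigning all of $M_1$ to the private-only sub-message and zero rate to every common-receiver sub-message) turns the first two families of \eqref{Region_Genral_Case} into $R_0\le I(U;Y_c)$ and $R_1\le I(X;Y_s|U)$, turns the last family into $R_0{+}R_1\le I(X;Y_s)$, and renders the mixed sum-rate family $R_0{+}R_1\le I(X;Y_s|U){+}I(U;Y_c)$ redundant, since it is implied by the sum of the first two single-rate bounds. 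Thus \eqref{Region_Genral_Case} specializes exactly to \eqref{3rdineqthm1}, so the region of Theorem \ref{Th_Ext_KM_region} is contained in the inner bound \eqref{Region_Genral_Case}, which is in turn contained in the capacity region because it is achievable. Since Theorem \ref{Th_Ext_KM_region} already proves that region to be the capacity region for Cases (i)--(iii), the three sets coincide and equality follows.

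For achievability I would use nested (layered) superposition coding matched to the chain $U_K {\markov} U_{K-1} {\markov} \cdots {\markov} U_{L+1} {\markov} X$, together with indirect, non-unique decoding at the intermediate common receivers. Split $M_1$ into $K{-}L$ independent parts of rates $T_{L+1},\ldots,T_{K-1}$ (one per common receiver $L{+}1,\ldots,K{-}1$) and $T_p$ (the private-only part), so that $R_1=\sum_{c=L+1}^{K-1}T_c+T_p$. Generate $2^{nR_0}$ cloud centers $u_K^n(m_0)$ i.i.d. according to $p(u_K)$; superimpose on each a layer $u_{K-1}^n$ carrying $t_{K-1}$; continue down the chain so that $u_c^n$ carries $(m_0,t_{K-1},\ldots,t_c)$; and finally superimpose $x^n$ carrying $t_p$ on $u_{L+1}^n$. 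For decoding, the weakest common receiver $Y_K$ decodes $u_K^n$ directly, each receiver $Y_c$ with $c\in\{L{+}1,\ldots,K{-}1\}$ decodes $m_0$ indirectly by searching for a jointly typical $u_c^n$ while treating the attached sub-message indices non-uniquely, and each private receiver $Y_s$ jointly decodes $x^n$ and hence $(m_0,M_1)$.

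The joint-typicality error analysis then produces a polytope in the split-rate variables $(R_0,T_{L+1},\ldots,T_{K-1},T_p)$: direct decoding at $Y_K$ gives $R_0\le I(U_K;Y_K)$; non-unique decoding at $Y_c$ gives $R_0+\sum_{c'=c}^{K-1}T_{c'}\le I(U_c;Y_c)$; and the nested decoder at each $Y_s$ contributes, for every level, a bound equating the total rate below a given auxiliary to the matching conditional mutual information, ranging from $R_1\le I(X;Y_s|U_K)$ at the top to $R_0{+}R_1\le I(X;Y_s)$ when even $m_0$ is in error. The principal obstacle is to project away these variables for general $K$: their number is not fixed, so Fourier--Motzkin elimination cannot be performed literally as in the $K=3$ case of \cite{nair2009capacity}. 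I would instead exploit the ``staircase'' structure of the constraints, namely that the indirect-decoding inequality at $Y_c$ pairs with exactly the private-receiver inequality whose conditioning auxiliary is $U_c$, so that summing $R_0+\sum_{c'=c}^{K-1}T_{c'}\le I(U_c;Y_c)$ with $\sum_{c'=L+1}^{c-1}T_{c'}+T_p\le I(X;Y_s|U_c)$ telescopes to $R_0{+}R_1\le I(X;Y_s|U_c){+}I(U_c;Y_c)$, while nonnegativity of the split rates reduces the indirect bound to $R_0\le I(U_c;Y_c)$. Carrying out this elimination systematically is the content of Appendix \ref{Appendix_proof_FME_structure}; the delicate step, and the one I expect to be hardest, is verifying in full generality that this fixed pairing produces no spurious inequalities and recovers \emph{exactly} the four families of \eqref{Region_Genral_Case} as the projected region.
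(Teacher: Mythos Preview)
Your proposal is correct and follows essentially the same approach as the paper. The coding scheme, decoding rules, and resulting split-rate polytope you describe coincide with those in Appendix~\ref{Appendix_Proof_general}; the paper likewise handles the projection by a structured Fourier--Motzkin elimination (Lemmas~\ref{Lemma_FME_R1k_R1l} and~\ref{Lemma_FME_R1_R0}, proved by induction on the eliminated split rates), where the Markov-chain monotonicity $I(X;Y_s|U_{i+1})\ge I(X;Y_s|U_c)$ for $c\le i$ is exactly what kills the cross-pairing inequalities you flag as the delicate point, and the capacity-region claim is established, as you propose, by specializing the auxiliaries so that \eqref{Region_Genral_Case} collapses to \eqref{3rdineqthm1}.
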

\begin{proof} We give an outline of the proof here and provide the detailed version in Appendix \ref{Appendix_Proof_general}. Split the private message $M_{1}$ into $K{-}L$ parts. There is a part for every common receiver except $Y_K$, with $M_{1i}$ for each $i{\in}\{L+1,L+2,..,K{-}1\}$ and one part, denoted $M_{11}$, to be decoded only by the private receivers. $M_{0}$ is represented by $U_{K}$, then $M_{1K{-}1}$ is superposed on $U_K$ to obtain $U_{K{-}1}$, and then $M_{1K{-}2}$ is superposed on $U_{K{-}1}$ to yield $U_{K{-}2}$, and so on. Finally, ($M_{0},M_{1}$) is represented by $X$. The receiver $Y_K$ finds $M_0$ by decoding $U_K$, the common receiver $Y_c$ finds $M_{0}$ by indirectly decoding $U_c$ for each $c \in \{L+1,..,K-1\}$, and the private receivers find ($M_{0},M_{1}$) by decoding $X$. 

From the proof in Appendix \ref{Appendix_Proof_general}, it is clear that the superposition coding only 
region of \eqref{3rdineqthm1} is contained in (\ref{Region_Genral_Case}) because we can just set all the split rates to zero in (\ref{Error_common_Rec}) and (\ref{Error_private_Rec}). Hence, for the classes of channels for which region \eqref{3rdineqthm1} is optimal as given by the three cases of Theorem \ref{Th_Ext_KM_region}, the rate region of Theorem \ref{TH_genral_case} reduces to region \eqref{3rdineqthm1} and is the capacity region. 
\end{proof}

\begin{remark}
Theorem \ref{TH_genral_case} represents one generalization of the achievability scheme proposed for $K=3$ and $L=1$ in \cite[Theorem 1]{nair2009capacity}. Other generalizations are possible. Most generally, one can split the private message into $2^{K-L}
$ parts as was done by the authors in \cite{Salman2018Ach}. However, a key feature of the scheme in Theorem \ref{TH_genral_case} is that the reliability conditions in split-rate space have a structure that allows us to analytically project away the split rates using the FME method to get a closed-form expression for the polygonal achievable rate region. This then sets the stage for proving converse results for new classes of DM BCs as in Theorem \ref{Th_Group_InDirect_Decoding} to follow. Whereas, projecting away exponentially many split rates for the scheme in \cite{Salman2018Ach} doesn't appear to lend itself to an explicit description for general $K$ and $L$. The ``brute-force" FME procedure, though tedious, can however be done when $K-L \leq 2$ \cite{Salman2018Ach}.  
\end{remark}

\begin{remark}
The authors in \cite[Theorem 1]{nair2009capacity} showed that indirect decoding is unnecessary in the $K{=}3,L{=}1$ case. This raises the question of whether the achievable rate region of Theorem \ref{TH_genral_case} remains unchanged if the common receivers employed unique decoding\footnote{This question is addressed by Proposition \ref{Prop_No_Need_IndirectDecoding} in Appendix \ref{app:nudnec?} in the more limited context of Corollary \ref{Corollary_OneIndirect} of Theorem \ref{TH_genral_case} to follow.}. Note however that the structure of the inequalities for non-unique decoding was used to project away the split rates to prove Theorem \ref{TH_genral_case}. Hence, even if non-unique decoding is unnecessary in Theorem \ref{TH_genral_case}, it is unclear if the (greater number of) inequalities for unique decoding would lend themselves to analytically eliminating split rates. Moreover, even if this were possible, the explicit description of the polygonal rate regions in that inner bound may have a description that could render proving the converse result of Theorem \ref{Th_Group_InDirect_Decoding} (to follow) intractable. 

\end{remark}

Next, we provide three examples of achievable rate regions resulting from Theorem \ref{TH_genral_case}. 


\begin{example}
\label{ex:K=3L=1}
Consider the smallest non-trivial case $K{=}3, L{=}1$ which was studied previously in \cite{nair2009capacity}. If we specialize the region (\ref{Region_Genral_Case}) to it, we get the region of rate pairs
\begin{align}
R_0{\leq}& \min \{I(U_3;Y_3),I(U_2,Y_2)\} \nonumber \\
R_1{\leq}& I(X;Y_1|U_3) \nonumber \\
R_0{+}R_1&{\leq} I(X;Y_1|U_2){+}I(U_2;Y_2)\nonumber \\
R_0{+}R_1&{\leq} I(X;Y_1)
\label{capacityK=3L=1}
\end{align}
for some $U_3 \markov U_2 \markov X$.
\end{example}
The above region, when further specialized to the class of DM BCs for which $Y_3 \prec Y_1$, is equivalent to that in \cite[Theorem 1]{nair2009capacity}. 

\begin{example}
\label{ex:K=4L=1}
Consider a case where $K{=}4 $ and $ L{=}1$. For this case, we get from (\ref{Region_Genral_Case}) that the set of rates for which
\begin{align}
R_0&{\leq} \min \{I(U_4;Y_4),I(U_3,Y_3),I(U_2,Y_2)\} \nonumber \\
R_1&{\leq} I(X;Y_1|U_4) \nonumber \\
R_0{+}R_1&{\leq} \min \{I(X;Y_1|U_2){+}I(U_2;Y_2), I(X;Y_1|U_3){+}I(U_3;Y_3)\} \nonumber \\
R_0{+}R_1&{\leq}  I(X;Y_1) \label{exk=4l=1}
\end{align}
for some $ U_4 \markov U_3 \markov U_2 \markov X$ is achievable.
\end{example}

\begin{example}
\label{ex:K=4L=2}
Consider next $K{=}4 $ and $ L{=}2$. For this case, we get from (\ref{Region_Genral_Case}) that the set of rates for which
\begin{align}\label{four-user-example}
R_0&{\leq} \min \{I(U_4;Y_4),I(U_3,Y_3)\} \nonumber \\
R_1&{\leq} \min \{I(X;Y_1|U_4),I(X;Y_2|U_4) \}\nonumber \\
R_0{+}R_1&{\leq} \min \{I(X;Y_1|U_3),I(X;Y_2|U_3)\}{+}I(U_3;Y_3)\nonumber \\
R_0{+}R_1&{\leq} \min \{ I(X;Y_1),I(X;Y_2)\} 
\end{align}
for some $ U_4 \markov U_3 \markov X$ is achievable.

\end{example}

The optimality of the regions in Examples \ref{ex:K=3L=1}-\ref{ex:K=4L=2} for classes of channels beyond those described by 
Theorem \ref{Th_Ext_KM_region} is addressed in Theorem \ref{Th_Group_InDirect_Decoding} in the next section. Example \ref{ex:K=4L=1} is further discussed in Section \ref{geq3rs}.

\subsection{Simple Rate-Splitting and A Converse}
\label{Simple Rate-Splitting}

Next, Theorem \ref{TH_genral_case} is specialized by splitting the private message into two parts instead of $K-L$ parts. Consequently, the resulting rate region is shown to be the capacity region for several classes of DM BCs using the information inequality {\em and} the Csiszar sum lemma in the proof of the converse. A discussion of the 
apparent insufficiency of these inequalities for proving a converse for more than two rate splits is given in Section \ref{geq3rs}.

With just two split rates, the most general scheme is one in which one group of common receivers decodes the common message indirectly while the rest of the common receivers decode it directly. 

\begin{corollary}
\label{Corollary_2_groups}
Fix $ l \in [L+1 : K] $. Partition the common  receivers into the two groups $\mathcal{C}_1 =  \{L+1,..,l\}$ and $\mathcal{C}_2 = \{l+1,..,K\}$. 
Then, the set of rate pairs ($R_0,R_1$) satisfying  
\begin{align}
\label{Region_prop_group_indirect_decoding}
R_0&\leq \min \{ I(U_{l} ; Y_{c_1}), I(U_K;Y_{c_2}\}, \enspace c_1\in \mathcal{C}_1 \enspace  c_2 \in \mathcal{C}_2 \nonumber \\
R_1&\leq I(X;Y_s|U_K), \enspace s\in \mathcal{S} \nonumber \\
R_0{+}R_1&\leq I(X;Y_s|U_{l}){+}I(U_{l} ; Y_{c_1}), \enspace c_1 \in \mathcal{C}_1 \enspace  s\in \mathcal{S} \nonumber \\
R_0{+}R_1&\leq I(X;Y_s),  \enspace  s\in \mathcal{S} 
\end{align} for some $U_K{\markov} U_{l}{\markov} X$ is achievable.
\end{corollary}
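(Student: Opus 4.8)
The plan is to obtain Corollary \ref{Corollary_2_groups} as a direct specialization of the general inner bound in Theorem \ref{TH_genral_case}, rather than rerunning the random-coding argument. The two-split scheme is precisely the $(K{-}L)$-split scheme of Theorem \ref{TH_genral_case} in which all but two of the superposition layers are made to coincide. Concretely, I would collapse the chain $U_K {\markov} U_{K-1} {\markov} \cdots {\markov} U_{L+1} {\markov} X$ by setting $U_c = U_K$ for every $c \in \mathcal{C}_2 = \{l+1,\dots,K\}$ and $U_c = U_l$ for every $c \in \mathcal{C}_1 = \{L+1,\dots,l\}$. Since any coding distribution $U_K {\markov} U_l {\markov} X$ appearing in the corollary extends to such a degenerate distribution on the longer chain (repeating variables trivially preserves the Markov structure $U_K {\markov} U_l {\markov} X$), this is a legitimate choice in Theorem \ref{TH_genral_case} and achieves the same rate pairs.

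The next step is to push this substitution through the four families of inequalities in \eqref{Region_Genral_Case}, one family at a time. The constraint $R_0 \leq I(U_c;Y_c)$ splits according to the two groups into $R_0 \leq I(U_K;Y_{c_2})$ for $c_2 \in \mathcal{C}_2$ and $R_0 \leq I(U_l;Y_{c_1})$ for $c_1 \in \mathcal{C}_1$, which together give the first line of \eqref{Region_prop_group_indirect_decoding}. The bound $R_1 \leq I(X;Y_s|U_K)$ and the bound $R_0 + R_1 \leq I(X;Y_s)$ carry over verbatim, yielding the second and fourth lines. The mixed sum-rate family $R_0 + R_1 \leq I(X;Y_s|U_c) + I(U_c;Y_c)$, taken over $c \in \mathcal{C}\setminus\{K\}$, produces exactly $R_0 + R_1 \leq I(X;Y_s|U_l) + I(U_l;Y_{c_1})$ for $c_1 \in \mathcal{C}_1$, i.e. the third line of \eqref{Region_prop_group_indirect_decoding}.

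The only step that needs a genuine (if small) argument — and the one I would flag as the main obstacle — is disposing of the sum-rate inequalities that the substitution generates for the indices $c \in \{l+1,\dots,K-1\} \subseteq \mathcal{C}_2$, which have no counterpart in \eqref{Region_prop_group_indirect_decoding}. I would show that these are redundant: for such $c$ the substitution yields $R_0 + R_1 \leq I(X;Y_s|U_K) + I(U_K;Y_c)$, and this is already implied by adding the retained constraint $R_1 \leq I(X;Y_s|U_K)$ (second line) to the constraint $R_0 \leq I(U_K;Y_c)$, which is part of the first-line bound for $c \in \mathcal{C}_2$. Hence these extra inequalities can be dropped without enlarging or shrinking the described region, so the collapsed bound coincides exactly with \eqref{Region_prop_group_indirect_decoding}.

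As a final sanity check that the specialization carries the intended operational meaning, I would note that the collapsed scheme is precisely two-layer superposition with indirect decoding: the cloud center $U_K$ carries $M_0$, the satellite $U_l$ carries the semi-public split of $M_1$ through which the receivers in $\mathcal{C}_1$ decode $M_0$ \emph{indirectly}, and $X$ carries the remaining private split for the receivers in $\mathcal{S}$, while the receivers in $\mathcal{C}_2$ decode $M_0$ \emph{directly} from $U_K$. This matches the description preceding the corollary and confirms that each bound in \eqref{Region_prop_group_indirect_decoding} is associated with the correct receiver and decoding operation.
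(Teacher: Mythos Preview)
Your argument is correct, but it takes a different route from the paper's. The paper specializes at the \emph{split-rate} level: it returns to the pre-FME inequalities in the proof of Theorem~\ref{TH_genral_case}, sets all split rates except $R_{11}$ and $R_{1l}$ to zero (so the scheme literally has only two superposition layers above the cloud center), and then eliminates the single remaining split-rate variable by Fourier--Motzkin. You instead specialize at the \emph{auxiliary-variable} level: you take the already-projected region \eqref{Region_Genral_Case} in the statement of Theorem~\ref{TH_genral_case}, collapse the long Markov chain by setting $U_c=U_K$ on $\mathcal{C}_2$ and $U_c=U_l$ on $\mathcal{C}_1$, and then argue that the extra sum-rate bounds for $c\in\{l{+}1,\dots,K{-}1\}$ are redundant given the first two lines.

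Both are valid specializations of Theorem~\ref{TH_genral_case}. Your route is a bit cleaner in that it uses only the \emph{statement} of Theorem~\ref{TH_genral_case} and avoids redoing any FME; the redundancy check you flag is exactly the small new step this buys you. The paper's route is more operational: it names the two-split coding scheme explicitly (which is what the surrounding text describes), at the cost of re-entering the proof of Theorem~\ref{TH_genral_case} and performing a one-variable elimination.
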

\begin{proof} We specialize the proof of Theorem \ref{TH_genral_case} in Appendix \ref{Appendix_Proof_general} by splitting the private message into just two parts so that $R_1{=}R_{11}{+}R_{1l}$ (i.e., set $R_{1,j}=0 \enspace \forall j\in \mathcal{C} \backslash \{l,K\}$\footnote{Note that $R_{1K}$ is always equal to zero \eqref{Error_common_Rec}.}). The region described in \eqref{Region_prop_group_indirect_decoding} hence follows by eliminating the single split rate variable using FME, a simpler task than the proof in Appendix \ref{Appendix_Proof_general}. 
Hence, when $l \in [L+1 : K-1] $, to achieve the region \eqref{Region_prop_group_indirect_decoding} the common receivers 
in $\mathcal{C}_1$ find the common message indirectly by decoding $U_{l}$ while those in $\mathcal{C}_2$ find $M_0$ directly by decoding $U_K$. Evidently, the variable $l$ controls the numbers of common receivers that decode $M_0$ directly and indirectly. When $l=K$, there is no rate-splitting. Here, all common receivers decode the common message directly from $U_K$ so that we just get the rate region of \eqref{3rdineqthm1} due to superposition coding alone. 
\end{proof}


\begin{theorem}
\label{Th_Group_InDirect_Decoding}
Let $ l \in [L+1 : K] $ and partition the common  receivers into $\mathcal{C}_1 =  \{L+1,..,l\}$ and $\mathcal{C}_2 = \{l+1,..,K\}$.
Consider the class of channels, shown in Fig \ref{Fig:Classes_Th3}, for which there exists a $j \in \mathcal{C}_1$ such that  $Y_j \prec  Y_{c_1}$ for all $c_1 \in \mathcal{C}_1 \backslash \{j\} $ and a $r \in \mathcal{S}$ such that $Y_r \preceq Y_s$ $\forall s \in \mathcal{S}\backslash \{r\} $, and $Y_{c_2} \prec  Y_r$  $ \forall c_2 \in \mathcal{C}_2 $. For this class of channels,
 the union of rate pairs ($R_0,R_1$) satisfying 
\begin{align}
\label{Region_prop_group_indirect_decoding_spec}
R_0&\leq \min \{ I(U_{l} ; Y_{j}), I(U_K;Y_{c_2}\}  \enspace  c_2 \in \mathcal{C}_2 \nonumber \\
R_1&\leq I(X;Y_r|U_K)  \nonumber \\
R_0{+}R_1&\leq I(X;Y_r|U_{l}){+}I(U_{l} ; Y_{j}) \nonumber \\ 
R_0{+}R_1&\leq I(X;Y_r) 
\end{align} 
for some $U_K{\markov} U_{l}{\markov} X$ is the capacity region.
\end{theorem}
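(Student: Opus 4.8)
The plan is to get achievability for free from Corollary~\ref{Corollary_2_groups} and then to supply a matching converse built on two nested auxiliary random variables, using the information inequality for the directly-decoding group $\mathcal{C}_2$ and the Csiszar sum lemma for the Korner--Marton-type bound tying $Y_j$ to $Y_r$.

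\textbf{Achievability.} First I would show that the general region \eqref{Region_prop_group_indirect_decoding} collapses term-by-term to \eqref{Region_prop_group_indirect_decoding_spec} under the stated orderings. Since $Y_j \prec Y_{c_1}$ for every $c_1 \in \mathcal{C}_1\setminus\{j\}$, the less noisy condition gives $I(U_l;Y_{c_1}) \geq I(U_l;Y_j)$, so the minimum over $\mathcal{C}_1$ in the first and third constraints is attained at $c_1=j$. Because $Y_r \preceq Y_s$ is a statement that holds for \emph{all} input distributions, it also holds conditionally: averaging $I(X;Y_s|U{=}u)\geq I(X;Y_r|U{=}u)$ over $u$ yields $I(X;Y_s|U_K)\geq I(X;Y_r|U_K)$, $I(X;Y_s|U_l)\geq I(X;Y_r|U_l)$, and $I(X;Y_s)\geq I(X;Y_r)$, so all minima over $\mathcal{S}$ are attained at $s=r$. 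Hence \eqref{Region_prop_group_indirect_decoding} reduces exactly to \eqref{Region_prop_group_indirect_decoding_spec}, which is therefore achievable.

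\textbf{Converse.} For a sequence of codes with $P_e^{(n)}\to 0$, Fano's inequality gives $nR_0 \leq I(M_0;Y_{c}^n)+n\epsilon_n$ for each common receiver, $nR_1\leq I(M_1;Y_r^n|M_0)+n\epsilon_n$, and $n(R_0{+}R_1)\leq I(M_0,M_1;Y_r^n)+n\epsilon_n$. The crux is the choice of auxiliaries
\[ U_{K,i}=(M_0,Y_{r,i+1}^n), \qquad U_{l,i}=(M_0,Y_j^{i-1},Y_{r,i+1}^n), \]
together with a uniform time-sharing index $Q$, setting $U_K=(Q,U_{K,Q})$, $U_l=(Q,U_{l,Q})$, $X=X_Q$. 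Since $U_{K,i}$ is a deterministic function of $U_{l,i}$, the required $U_K \markov U_l \markov X$ holds trivially, and $U_{l,i}\markov X_i \markov (Y_{j,i},Y_{r,i},Y_{c_2,i})$ holds by memorylessness, so the identified triple is a feasible point of the region. The $R_1$ bound follows from $I(M_1;Y_r^n|M_0)=\sum_i I(M_1;Y_{r,i}|U_{K,i})\leq \sum_i I(X_i;Y_{r,i}|U_{K,i})$ (backward chain rule plus data processing); the bound $R_0{+}R_1\leq I(X;Y_r)$ is the standard single-letterization of $I(M_0,M_1;Y_r^n)$; and $R_0\leq I(U_l;Y_j)$ is pure monotonicity, $I(M_0;Y_{j,i}|Y_j^{i-1})\leq I(U_{l,i};Y_{j,i})$. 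For $R_0\leq I(U_K;Y_{c_2})$ I would use $Y_{c_2}\prec Y_r$ and the \emph{future} analog of Lemma~\ref{Lemma_Information_Inequality} (valid by time reversal, since the channel is memoryless and the ordering is time-invariant): this shows $I(M_0;Y_{c_2,i}|Y_{c_2,i+1}^n)\leq I(M_0,Y_{r,i+1}^n;Y_{c_2,i})=I(U_{K,i};Y_{c_2,i})$.

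The main obstacle is the remaining sum-rate bound $R_0{+}R_1\leq I(X;Y_r|U_l)+I(U_l;Y_j)$, which must use the \emph{same} $U_l$ that already served the other bounds. Here I would add $nR_0\leq I(M_0;Y_j^n)$ to $nR_1\leq I(M_1;Y_r^n|M_0)$. Expanding $I(M_1;Y_r^n|M_0)=\sum_i I(M_1;Y_{r,i}|M_0,Y_{r,i+1}^n)$ and inserting $Y_j^{i-1}$ into the conditioning yields $\sum_i I(X_i;Y_{r,i}|U_{l,i})$ plus a residual $\sum_i I(Y_j^{i-1};Y_{r,i}|M_0,Y_{r,i+1}^n)$; expanding $I(M_0;Y_j^n)=\sum_i I(M_0;Y_{j,i}|Y_j^{i-1})$ toward $\sum_i I(U_{l,i};Y_{j,i})$ leaves a residual $-\sum_i I(Y_{r,i+1}^n;Y_{j,i}|M_0,Y_j^{i-1})$. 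The two residuals cancel \emph{exactly} by the Csiszar sum lemma applied with $M_0$ in the conditioning, giving $n(R_0{+}R_1)\leq\sum_i\bigl[I(X_i;Y_{r,i}|U_{l,i})+I(U_{l,i};Y_{j,i})\bigr]+n\epsilon_n$. The delicate point throughout is that a single past/future split of the histories, $(M_0,Y_j^{i-1},Y_{r,i+1}^n)$, must simultaneously (a) nest a subvariable $U_{K,i}$ on which the information inequality can act for the $\mathcal{C}_2$ receivers and (b) furnish the precise mixed history the Csiszar sum lemma needs for the $(Y_j,Y_r)$ pair; reconciling these two distinct tools under one Markov chain is what makes the two-split scheme tractable here, and is precisely where a third split would break the cancellation, as anticipated in Section~\ref{geq3rs}.
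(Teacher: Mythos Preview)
Your proof is correct and follows the same overall architecture as the paper: achievability by collapsing Corollary~\ref{Corollary_2_groups} under the given orderings, and a converse that combines the information inequality (for the $\mathcal{C}_2$ bounds on $R_0$ and the $R_1$ bound) with the Csiszar sum lemma (for the mixed sum-rate bound involving $Y_j$ and $Y_r$), all under a single pair of nested auxiliaries. The one difference is purely cosmetic: the paper takes $U_{K,i}=(M_0,Y_{r,1}^{i-1})$ and $U_{l,i}=(M_0,Y_{r,1}^{i-1},Y_{j,i+1}^n)$, i.e.\ the \emph{past} of $Y_r$ and the \emph{future} of $Y_j$, whereas you take the time-reversed choice $U_{K,i}=(M_0,Y_{r,i+1}^n)$ and $U_{l,i}=(M_0,Y_j^{i-1},Y_{r,i+1}^n)$, which correspondingly requires the future-indexed analog of Lemma~\ref{Lemma_Information_Inequality} (valid by symmetry, as you note) and swaps the direction of the chain rules; the Csiszar cancellation goes through identically in either orientation.
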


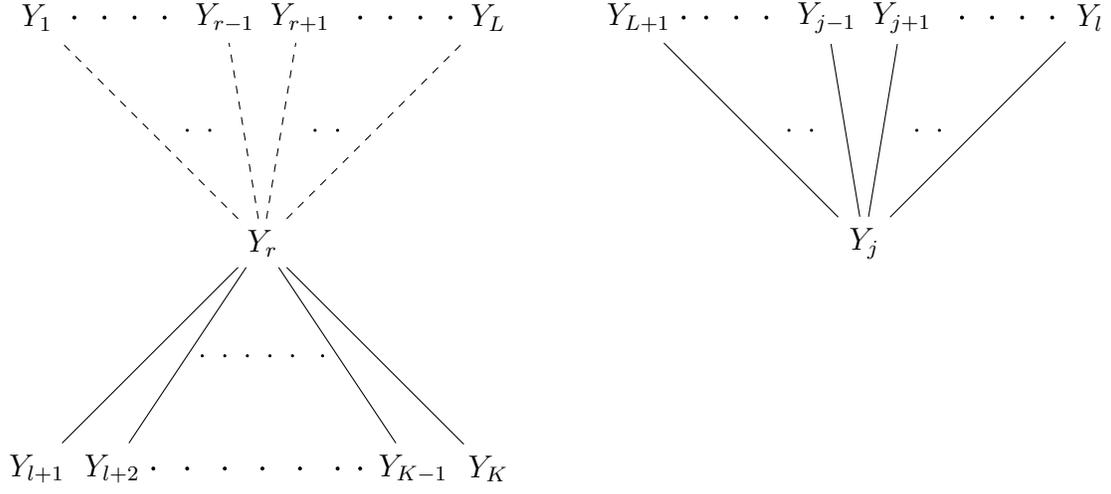
\begin{figure*}[t!]
 \centering
\begin{tikzpicture}
 \node (y1) at (-3,6)  {$Y_1$};
  \node (yr_1) at (-0.5,6)  {$Y_{r-1}$};
  \node (yr+1) at (0.5,6)  {$Y_{r+1}$};
  \node (yL) at (3,6)  {$Y_{L}$};
 
 \node (yL+1) at (5,6)  {$Y_{L+1}$};
  \node (yj-1) at (7.5,6)  {$Y_{j-1}$};
  \node (yj+1) at (8.5,6)  {$Y_{j+1}$};
  \node (yl) at (11,6)  {$Y_{l}$};
\node (yj) at (8,3)  {$Y_{j}$};

 \node (yr) at (0,3)  {$Y_r$};
  \node (yl_1) at (-3,0)  {$Y_{l+1}$};
  \node (yl_2) at (-2,0)  {$Y_{l+2}$};
  \node (yK_1) at (2,0)  {$Y_{K-1}$};
  \node (yK) at (3,0)  {$Y_K$};
 
 \draw [dashed] (yr) -- (y1);
 \draw [dashed] (yr) -- (yr_1);
\draw [dashed] (yr) -- (yr+1);
 \draw [dashed] (yr) -- (yL);

\draw [-] (yj) -- (yL+1);
 \draw [-] (yj) -- (yj-1);
\draw [-] (yj) -- (yj+1);
\draw [-] (yj) -- (yl);

 \draw [-] (yr) -- (yl_1);
 \draw [-] (yr) -- (yl_2);
\draw [-] (yr) -- (yK);
\draw [-] (yr) -- (yK_1);

\draw[black,fill=black] (-1.45,0) circle (.1ex);
\draw[black,fill=black] (-1,0) circle (.1ex);
\draw[black,fill=black] (-0.5,0) circle (.1ex);
\draw[black,fill=black] (0,0) circle (.1ex);
\draw[black,fill=black] (1,0) circle (.1ex);
\draw[black,fill=black] (1.3,0) circle (.1ex);
\draw[black,fill=black] (0.5,0) circle (.1ex);

\draw[black,fill=black] (-0.8,1.5) circle (.07ex);
\draw[black,fill=black] (-0.5,1.5) circle (.07ex);
\draw[black,fill=black] (-0.2,1.5) circle (.07ex);
\draw[black,fill=black] (0.1,1.5) circle (.07ex);
\draw[black,fill=black] (.4,1.5) circle (.07ex);
\draw[black,fill=black] (0.8,1.5) circle (.07ex);

\draw[black,fill=black] (-2.5,6) circle (.1ex);
\draw[black,fill=black] (-2.1,6) circle (.1ex);
\draw[black,fill=black] (-1.7,6) circle (.1ex);
\draw[black,fill=black] (-1.3,6) circle (.1ex);
\draw[black,fill=black] (1.3,6) circle (.1ex);
\draw[black,fill=black] (1.7,6) circle (.1ex);
\draw[black,fill=black] (2.1,6) circle (.1ex);
\draw[black,fill=black] (2.5,6) circle (.1ex);

\draw[black,fill=black] (-1,4.5) circle (.07ex);
\draw[black,fill=black] (-0.7,4.5) circle (.07ex);
\draw[black,fill=black] (1,4.5) circle (.07ex);
\draw[black,fill=black] (0.7,4.5) circle (.07ex);

\draw[black,fill=black] (5.6,6) circle (.1ex);
\draw[black,fill=black] (5.9,6) circle (.1ex);
\draw[black,fill=black] (6.3,6) circle (.1ex);
\draw[black,fill=black] (6.7,6) circle (.1ex);
\draw[black,fill=black] (9.3,6) circle (.1ex);
\draw[black,fill=black] (9.7,6) circle (.1ex);
\draw[black,fill=black] (10.1,6) circle (.1ex);
\draw[black,fill=black] (10.5,6) circle (.1ex);

\draw[black,fill=black] (7,4.5) circle (.07ex);
\draw[black,fill=black] (7.3,4.5) circle (.07ex);
\draw[black,fill=black] (9,4.5) circle (.07ex);
\draw[black,fill=black] (8.7,4.5) circle (.07ex);

\end{tikzpicture}
        \caption{ The class of channel in the statement of Theorem \ref{Th_Group_InDirect_Decoding}: there exists a $j \in \mathcal{C}_1$ such that  $Y_j \prec  Y_{c_1}$ for all $c_1 \in \mathcal{C}_1 \backslash \{j\} $ (upper right part of figure), an $r \in \mathcal{S}$ such that $Y_r \preceq Y_s$ $\forall s \in \mathcal{S}\backslash \{r\} $ (upper left part of figure), and $Y_{c_2} \prec  Y_r$  $ \forall c_2 \in \mathcal{C}_2 $ (lower left part of figure). 
        }
   \label{Fig:Classes_Th3}
   \end{figure*}

\begin{proof} The proof of achievability can be deduced from Corollary \ref{Corollary_2_groups} by removing the redundant inequalities in \eqref{Region_prop_group_indirect_decoding} for the class of channels under consideration in this theorem.

The proof of converse for the first two inequalities of \eqref{Region_prop_group_indirect_decoding_spec} follow the same technique (which depends on the information inequality) as that in Theorem \ref{Th_Ext_KM_region}. The last inequality is straightforward. The only non-trivial inequality is the third one. We show that the optimal identification of the auxiliary random variables is $U_{Ki}{=}M_0Y_{r,1}^{i-1}$ and $U_{li}=M_0Y_{r,1}^{i-1}Y_{j,i+1}^n$. To prove the third inequality, observe that for any sequence of $(2^{nR_0}, 2^{nR_1}, n)$ codes with $\lim_{n \rightarrow \infty} P_e^{(n)} = 0 $, we have
\begin{align}
n (R_{0}{+}R_1)&=H(M_0)+H(M_1) \nonumber \\
& \leq I(M_0;Y_{j}^n){+} I(M_1;M_0,Y_r^n){+} n \epsilon_n 
 \label{chainrule7}\\
& {=} \sum_{i=1}^n  I(M_0;Y_{j,i}|Y_{j,i+1}^n)  \nonumber \\ & 
+  \sum_{i=1}^n I(M_{1};Y_{r,i}|M_0,Y_{r,1}^{i-1}) {+}n \epsilon_n
\label{fano4} \\
&{\leq} \sum_{i=1}^n I(M_0,Y_{j,i+1}^n;Y_{j,i}) \nonumber\\
&+  \sum_{i=1}^n I(M_{1};Y_{r,i}|M_0,Y_{r,1}^{i-1}) {+}n \epsilon_n \nonumber \\
&{\leq} \sum_{i=1}^n I(M_0,Y_{r,1}^{i-1},Y_{j,i+1}^n;Y_{j,i}) \nonumber \\
&- \sum_{i=1}^n I(Y_{r,1}^{i-1};Y_{j,i}|M_0,Y_{j,i+1}^n) \nonumber \\
&{+} \sum_{i=1}^n I(M_{1};Y_{r,i}|M_0,Y_{r,1}^{i-1},Y_{j,i+1}^n) \nonumber \\
& + \sum_{i=1}^n I(Y_{j,i+1}^n;Y_{r,i}|M_0,Y_{r,1}^{i-1}) 
{+}n \epsilon_n \label{Due_to_chain_rule}\\
&{=}\sum_{i=1}^n I(M_{1};Y_{r,i}|M_0,Y_{r,1}^{i-1},Y_{j,i+1}^n) \nonumber \\
&+ \sum_{i=1}^n I(M_0,Y_{r,1}^{i-1},Y_{j,i+1}^n;Y_{j,i})+n \epsilon_n \label{Due_to_scizar_sum_lemma} \\
&{\leq } \sum_{i=1}^n I(X_i;Y_{r,i}|U_{l,i}){+}I(U_{l,i};Y_{j,i}){+}n \epsilon_n \nonumber 
\end{align}
where (\ref{chainrule7}) follows from Fano's inequality, the chain rule and non-negativity of conditional mutual information, and (\ref{fano4}) from chain rule for mutual information and the independence between $M_1$ and $M_0$. Inequality (\ref{Due_to_chain_rule}) also follows from the chain rule, and (\ref{Due_to_scizar_sum_lemma}) is due to the Csiszar sum lemma. Now letting $Q\in \{1,2,...,n\}$ to be a uniformly distributed random variable independent of all other random variables and defining $U_K=(U_{K,Q},Q), U_{l}=(U_{l,Q},Q)$, and $X=X_Q$, the converse proof can be completed in the standard way. Note that we have $U_K{\markov}U_{l}{\markov}X$ since $U_{l} {=} (U_K,Y_{j,i+1}^n)$.


\end{proof}


\begin{remark}
For each $l \in [L+1:K-1]$, Theorem \ref{Th_Group_InDirect_Decoding} provides a distinct achievability region and establishes its capacity optimality for an associated distinct class of channels. Consequently, the general inner bound of Theorem \ref{TH_genral_case} is tight for the $K-L-1$ classes of channels characterized by each $l \in [L+1:K-1]$. When $l=K$, Theorem \ref{Th_Group_InDirect_Decoding} recovers Case (iii) of Theorem \ref{Th_Ext_KM_region}.
\end{remark}

The following examples illustrate the effect of varying $l$ on the capacity region and the class of channels for which capacity is achieved. 

\begin{example}
\label{Ex:K=4L=1-manyls}
Consider $K=4$ and $L=1$ and each of the three choices of $l\in\{2,3,4\}$. 

Let $l=K=4$ first. Theorem \ref{Th_Group_InDirect_Decoding} should reduce to Case (iii) of Theorem \ref{Th_Ext_KM_region}. Indeed, we get the result that the set of rate pairs ($R_0,R_1$) satisfying
\begin{align}
R_0&\leq I(U_4;Y_4)\nonumber\\
R_1 &\leq I(X;Y_1|U_4)\nonumber \\
R_0+R_1&\leq I(X;Y_1) \label{exthm3k=4l=1}
\end{align}
for some $p(u_4,x)$ is the capacity region when $Y_2\succ Y_4$ and $Y_3\succ Y_4$ (without loss of generality, we let $j=4$ in Theorem \ref{Th_Group_InDirect_Decoding}). Assuming that receiver $Y_4$ is more noisy than the other two common receivers essentially boils the problem down to the two-receiver problem in which only receivers $Y_1$ and $ Y_4$ are present. Indeed, the region in \eqref{exthm3k=4l=1} is the Korner-Marton region for the two-receiver DM BC (with receivers $(Y_1, Y_4)$) with no restriction between their channels.

Consider next the case of $l=3$. Theorem \ref{Th_Group_InDirect_Decoding} asserts that the union of rate pairs ($R_0,R_1$) satisfying
\begin{align*}
R_0&\leq \min \{I(U_3;Y_3),I(U_4;Y_4)\}\nonumber\\
R_1 &\leq I(X;Y_1|U_4)\nonumber \\
R_0+R_1&\leq I(X;Y_1|U_3)+I(U_3;Y_3)\nonumber 
\end{align*}
for some $U_4\markov U_3 \markov X$ is the capacity region when $Y_1\succ Y_4$ and $Y_2\succ Y_3$, (without loss of generality, we take $j=3$ in Theorem \ref{Th_Group_InDirect_Decoding}). The above region is hence the same as that in \cite[Proposition 7]{nair2009capacity} for the three-receiver problem with receivers ($Y_1,Y_3,Y_4$) with the single restriction $Y_1\succ Y_4$. The rationale behind this is that when $l=3$, receivers $Y_2$ and $Y_3$ decode $M_0$ indirectly from $U_2$. But since $Y_2\succ Y_3$, only the weaker receiver $Y_3$ matters since $Y_2$ can decode $M_0$ if $Y_3$ can, essentially reducing the problem to the three-receiver case with receivers $Y_1,Y_3$ and $Y_4$.

When $l=2$, we get from Theorem \ref{Th_Group_InDirect_Decoding} that the capacity region for the class of channels under the restrictions $Y_1\succ Y_3,Y_1\succ Y_4$ is the rate pairs ($R_0,R_1$) satisfying
\begin{align}
R_0&\leq \min \{I(U_2;Y_2),I(U_4;Y_3),I(U_4;Y_4)\}\nonumber\\
R_1 &\leq I(X;Y_1|U_4)\nonumber \\
R_0+R_1&\leq I(X;Y_1|U_2)+I(U_2;Y_2)\nonumber 
\end{align}
for some $U_4\markov U_2\markov X$. Note that the last inequality in \eqref{Region_prop_group_indirect_decoding_spec} is redundant. Hence, the above result can be seen as a generalization of \cite[Proposition 7]{nair2009capacity} from $K=3$ to $K=4$. 
 
\end{example}

\begin{example}
\label{Ex:K=6_L=1_different_l}
Consider next the case of $K=6$ and $L=1$. Here, consider $l\in\{2,3,4,5\}$. 
The classes of channels for these $l$ for which Theorem \ref{Th_Group_InDirect_Decoding} gives the capacity region are depicted in Fig. \ref{Fig:Classes_for_K=6_l=2_3_4_5}. For $l=2$, Fig. \ref{Fig:Classes_K=6_l=2} represents the conditions $Y_1 \succ Y_i$ for $i\in\{3,4,5,6\}$. For $l=3$ and $j=3$ in Theorem \ref{Th_Group_InDirect_Decoding}, Fig. \ref{Fig:Classes_K=6_l=3} represents the conditions $Y_1 \succ Y_i$ for $i\in\{4,5,6\}$ and $Y_2\succ Y_3$. For $l=4$ and $j=4$, Fig. \ref{Fig:Classes_K=6_l=4} represents the conditions $Y_1 \succ Y_i$ for $i\in\{5,6\}$ and $Y_{k}\succ Y_4$ for $k\in\{2,3\}$. Finally, for $l=5$ and $j=5$, Fig. \ref{Fig:Classes_K=6_l=5} represents the conditions $Y_1 \succ Y_6$ and $Y_{k}\succ Y_5$ for $k\in\{2,3,4\}$. 
\end{example}

The case $l=L+1$ of Theorem \ref{Th_Group_InDirect_Decoding} deserves explicit mention, which we state as a corollary.

\begin{figure*}[t!]
    \centering
    \begin{subfigure}[t]{0.2\textwidth}
        \centering
\begin{tikzpicture}
 \node (y1) at (0,3)  {$Y_1$};
  \node (y2) at (1,3)  {$Y_2$};
  \node (y3) at (-1.5,0)  {$Y_3$};
  \node (y4) at (-0.5,0)  {$Y_4$};
  \node (y5) at (0.5,0)  {$Y_5$};
  \node (y6) at (1.5,0)  {$Y_6$};
    \draw [-] (y1) -- (y3);
\draw [-] (y1) -- (y4);
    \draw [-] (y1) -- (y5);
    \draw [-] (y1) -- (y6);
\end{tikzpicture}
        \caption{The class of channels for $l=2$}
   \label{Fig:Classes_K=6_l=2}
   \end{subfigure}%
    ~ 
    \begin{subfigure}[t]{0.2\textwidth}
        \centering
        \begin{tikzpicture}
  \node (y1) at (0,3)  {$Y_1$};
  \node (y2) at (2,3)  {$Y_2$};
  \node (y3) at (2,0)  {$Y_3$};
  \node (y4) at (-1,0)  {$Y_4$};
  \node (y5) at (0,0)  {$Y_5$};
  \node (y6) at (1,0)  {$Y_6$};
    \draw [-] (y1) -- (y4);
    \draw [-] (y1) -- (y5);
    \draw [-] (y1) -- (y6);
    \draw [-] (y2) -- (y3);
    \end{tikzpicture}
        \caption{The class of channels for $l=3$}
   \label{Fig:Classes_K=6_l=3}
   \end{subfigure}
    \begin{subfigure}[t]{0.2\textwidth}
        \centering
\begin{tikzpicture}
 \node (y1) at (0,3)  {$Y_1$};
  \node (y2) at (1.5,3)  {$Y_2$};
  \node (y3) at (2.5,3)  {$Y_3$};
  \node (y4) at (2,0)  {$Y_4$};
  \node (y5) at (-0.5,0)  {$Y_5$};
  \node (y6) at (0.5,0)  {$Y_6$};
    \draw [-] (y1) -- (y5);
    \draw [-] (y1) -- (y6);
    \draw [-] (y2) -- (y4);
    \draw [-] (y3) -- (y4);
\end{tikzpicture}
        \caption{The class of channels for $l=4$}
   \label{Fig:Classes_K=6_l=4}
   \end{subfigure}%
    ~     \begin{subfigure}[t]{0.2\textwidth}
        \centering
\begin{tikzpicture}
 \node (y1) at (0,3)  {$Y_1$};
  \node (y2) at (1,3)  {$Y_2$};
  \node (y3) at (2,3)  {$Y_3$};
  \node (y4) at (3,3)  {$Y_4$};
  \node (y5) at (2,0)  {$Y_5$};
  \node (y6) at (0,0)  {$Y_6$};
    \draw [-] (y1) -- (y6);
\draw [-] (y2) -- (y5);
    \draw [-] (y3) -- (y5);
    \draw [-] (y4) -- (y5);
\end{tikzpicture}
        \caption{The class of channels of $l=5$}
   \label{Fig:Classes_K=6_l=5}
   \end{subfigure}%
\caption{The classes of channels for which Theorem \ref{Th_Group_InDirect_Decoding} gives capacity corresponding to different choices of $l$ when $K=6$ and $L=1$. See Example \ref{Ex:K=6_L=1_different_l}.} 
\label{Fig:Classes_for_K=6_l=2_3_4_5}
\end{figure*}
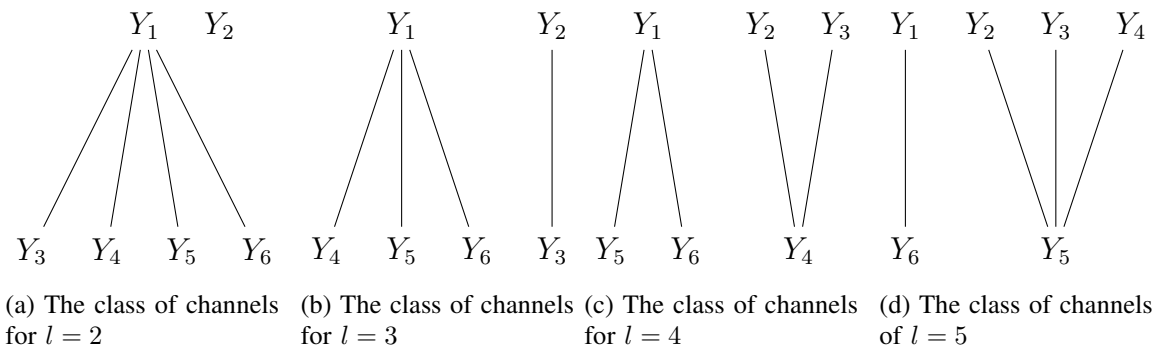

\begin{figure*}[t!]
 \centering
\begin{tikzpicture}
 \node (y1) at (-3,6)  {$Y_1$};
  \node (yr_1) at (-0.5,6)  {$Y_{r-1}$};
  \node (yr+1) at (0.5,6)  {$Y_{r+1}$};
  \node (yL) at (3,6)  {$Y_{L}$};
 
\node (yj) at (4,3)  {$Y_{L+1}$};
 
 \node (yr) at (0,3)  {$Y_r$};
  \node (yl_1) at (-3,0)  {$Y_{L+2}$};
  \node (yl_2) at (-2,0)  {$Y_{L+3}$};
\node (yK_1) at (2,0)  {$Y_{K-1}$};
  \node (yK) at (3,0)  {$Y_K$};

 \draw [dashed] (yr) -- (y1);
 \draw [dashed] (yr) -- (yr_1);
\draw [dashed] (yr) -- (yr+1);
 \draw [dashed] (yr) -- (yL);

 \draw [-] (yr) -- (yl_1);
 \draw [-] (yr) -- (yl_2);
\draw [-] (yr) -- (yK);
\draw [-] (yr) -- (yK_1);

\draw[black,fill=black] (-1.4,0) circle (.1ex);
\draw[black,fill=black] (-1.1,0) circle (.1ex);
\draw[black,fill=black] (-0.5,0) circle (.1ex);
\draw[black,fill=black] (0,0) circle (.1ex);
\draw[black,fill=black] (0.5,0) circle (.1ex);
\draw[black,fill=black] (1.4,0) circle (.1ex);
\draw[black,fill=black] (1.1,0) circle (.1ex);

\draw[black,fill=black] (-0.8,1.5) circle (.07ex);
\draw[black,fill=black] (-0.5,1.5) circle (.07ex);
\draw[black,fill=black] (-0.2,1.5) circle (.07ex);
\draw[black,fill=black] (0.1,1.5) circle (.07ex);
\draw[black,fill=black] (.4,1.5) circle (.07ex);
\draw[black,fill=black] (0.8,1.5) circle (.07ex);

\draw[black,fill=black] (-2.5,6) circle (.1ex);
\draw[black,fill=black] (-2.1,6) circle (.1ex);
\draw[black,fill=black] (-1.7,6) circle (.1ex);
\draw[black,fill=black] (-1.3,6) circle (.1ex);
\draw[black,fill=black] (1.3,6) circle (.1ex);
\draw[black,fill=black] (1.7,6) circle (.1ex);
\draw[black,fill=black] (2.1,6) circle (.1ex);
\draw[black,fill=black] (2.5,6) circle (.1ex);

\draw[black,fill=black] (-1,4.5) circle (.07ex);
\draw[black,fill=black] (-0.7,4.5) circle (.07ex);
\draw[black,fill=black] (1,4.5) circle (.07ex);
\draw[black,fill=black] (0.7,4.5) circle (.07ex);

\end{tikzpicture}
        \caption{ The class of channel in the statement of Corollary \ref{Corollary_OneIndirect}: $ \exists r \in \mathcal{S}$ 
such that $Y_r \preceq Y_s$ $ \forall s  {\in} \mathcal{S}\backslash \{r\} $ and $Y_c \prec  Y_r$ $ \forall c{\in} \mathcal{C} \backslash \{L+1 \} $. This figure is the same as Fig. \ref{Fig:Classes_Th3} if we let $l=L+1$, and hence $j=L+1$. } 
   \label{Fig:Classes_Corollary_l_L+1}
   \end{figure*}
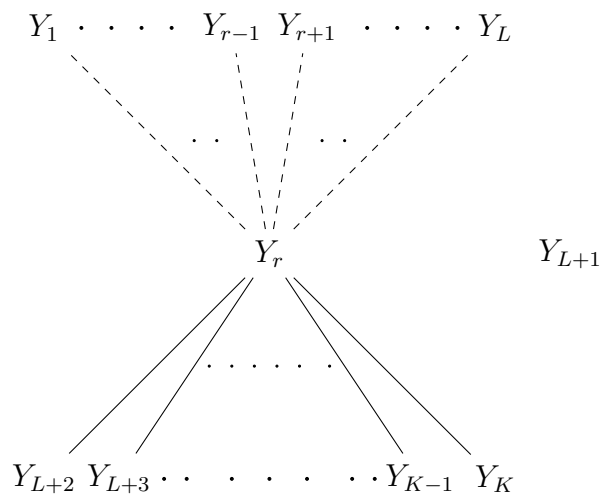

\begin{corollary}
\label{Corollary_OneIndirect}
For the class of DM BCs with two nested multicast messages 
defined by the restrictions that $ \exists r \in \mathcal{S}$ 
such that $Y_r \preceq Y_s$ $ \forall s  {\in} \mathcal{S}\backslash \{r\} $ and also $Y_c \prec  Y_r$ $ \forall c{\in} \mathcal{C} \backslash \{L+1 \} $, the union of rate pairs ($R_0,R_1$) satisfying 
\begin{align}
&R_0\leq \min \{I(U_K;Y_c),I(U_{L{+}1} ; Y_{L{+}1})\} \enspace \; \forall c{\in} \mathcal{C} \backslash \{L+1 \} \nonumber \\
&R_1\leq I(X;Y_r|U_K)  \nonumber \\
&R_0{+}R_1\leq I(X;Y_r|U_{L{+}1}){+}I(U_{L{+}1} ; Y_{L{+}1}) \label{Region_All_decode_Directly} 
\end{align} for some $U_K{\markov} U_{L{+}1}{\markov} X$ is the capacity region.   
\end{corollary}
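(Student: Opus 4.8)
The plan is to derive Corollary \ref{Corollary_OneIndirect} as the $l = L+1$ specialization of Theorem \ref{Th_Group_InDirect_Decoding}, inheriting both achievability and converse, and then to verify that the single inequality by which the two regions differ is redundant for this class of channels. First I would set $l = L+1$ in Theorem \ref{Th_Group_InDirect_Decoding}. This makes the indirectly-decoding group the singleton $\mathcal{C}_1 = \{L+1\}$, so the index $j \in \mathcal{C}_1$ is forced to be $j = L+1$ and the hypothesis $Y_j \prec Y_{c_1}$ for all $c_1 \in \mathcal{C}_1 \backslash \{j\}$ holds vacuously; meanwhile $\mathcal{C}_2 = \{L+2,\ldots,K\} = \mathcal{C} \backslash \{L+1\}$. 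The two surviving hypotheses of Theorem \ref{Th_Group_InDirect_Decoding}, namely that some $r \in \mathcal{S}$ satisfies $Y_r \preceq Y_s$ for all $s \in \mathcal{S} \backslash \{r\}$ and that $Y_{c_2} \prec Y_r$ for all $c_2 \in \mathcal{C}_2$, then match the stated conditions of the corollary exactly. Substituting $l = L+1$ and $j = L+1$ into \eqref{Region_prop_group_indirect_decoding_spec} reproduces \eqref{Region_All_decode_Directly} verbatim, the sole difference being that \eqref{Region_prop_group_indirect_decoding_spec} additionally carries the sum-rate bound $R_0 + R_1 \leq I(X;Y_r)$.

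It therefore remains only to show that this extra bound is redundant, so that discarding it leaves the region unchanged for each coding distribution and hence leaves the union over distributions unchanged. Since the setting of the corollary has at least two common receivers, $\mathcal{C} \backslash \{L+1\}$ is nonempty; I would fix any $c$ in it. The less noisy hypothesis $Y_c \prec Y_r$ gives $I(U_K;Y_c) \leq I(U_K;Y_r)$, while the Markov chain $U_K \markov X \markov Y_r$ (a consequence of $U_K \markov U_{L+1} \markov X$) gives the chain-rule identity $I(X;Y_r) = I(U_K;Y_r) + I(X;Y_r|U_K)$. Summing the first bound $R_0 \leq I(U_K;Y_c)$ of \eqref{Region_All_decode_Directly} with the second bound $R_1 \leq I(X;Y_r|U_K)$ then yields
\begin{equation}
R_0 + R_1 \leq I(U_K;Y_c) + I(X;Y_r|U_K) \leq I(U_K;Y_r) + I(X;Y_r|U_K) = I(X;Y_r),
\end{equation}
so $R_0 + R_1 \leq I(X;Y_r)$ is implied by bounds already present in \eqref{Region_All_decode_Directly} and may be omitted.

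With the redundancy established, the achievability of \eqref{Region_All_decode_Directly} follows from the achievability half of Theorem \ref{Th_Group_InDirect_Decoding} (equivalently, from Corollary \ref{Corollary_2_groups} with a single split rate at $l = L+1$), and the converse follows from the converse half of Theorem \ref{Th_Group_InDirect_Decoding}, specialized with the auxiliaries $U_{K,i} = M_0 Y_{r,1}^{i-1}$ and $U_{L+1,i} = M_0 Y_{r,1}^{i-1} Y_{L+1,i+1}^n$, which combine the information inequality (for the first two bounds) with the Csiszar sum lemma (for the sum-rate bound). I do not anticipate a genuine obstacle: the whole argument is the bookkeeping of the specialization plus the one-line redundancy computation above. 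The one point needing care is that this computation requires $\mathcal{C} \backslash \{L+1\}$ to be nonempty, i.e., $K \geq L+2$, which is exactly the regime the corollary addresses, where $Y_{L+1}$ decodes the common message indirectly and the remaining common receivers decode it directly (cf. Fig. \ref{Fig:Classes_Corollary_l_L+1}).
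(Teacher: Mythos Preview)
Your proposal is correct and follows essentially the same route as the paper: specialize Theorem \ref{Th_Group_InDirect_Decoding} to $l=L+1$ (forcing $j=L+1$ and making the $\mathcal{C}_1$-condition vacuous), then observe that the extra sum-rate bound $R_0+R_1\le I(X;Y_r)$ is redundant because the sum of the first two inequalities in \eqref{Region_All_decode_Directly} already implies it. Your write-up is in fact more explicit than the paper's, which simply asserts the redundancy; you supply the one-line justification via $I(U_K;Y_c)\le I(U_K;Y_r)$ (less noisy) and the chain rule $I(X;Y_r)=I(U_K;Y_r)+I(X;Y_r|U_K)$, and you correctly flag that this step uses nonemptiness of $\mathcal{C}\setminus\{L+1\}$.
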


\begin{proof} The proof follows directly from Theorem \ref{Th_Group_InDirect_Decoding} by setting $l=L+1$. Note that the last inequality in \eqref{Region_prop_group_indirect_decoding_spec} becomes redundant since the sum of the first two inequalities in \eqref{Region_All_decode_Directly} gives a more restrictive condition on the sum-rate.



\end{proof}

The class of channels for which Corollary \ref{Corollary_OneIndirect} gives capacity is depicted in Fig. \ref{Fig:Classes_Corollary_l_L+1}. Comparing it with Fig. \ref{Fig:Classes_Th1_i}, it is evident that it represents a strictly larger class of channels than that of Case (i) of Theorem \ref{Th_Ext_KM_region} since the restriction $Y_{L+1} \prec  Y_r$ is not needed. 

\begin{remark}
Examples \ref{ex:K=3L=1}, \ref{ex:K=4L=2} and the $l=2$ cases of Examples \ref{Ex:K=4L=1-manyls} and \ref{Ex:K=6_L=1_different_l} are  evidently examples of Corollary \ref{Corollary_OneIndirect}. In particular, according to Corollary \ref{Corollary_OneIndirect} the achievable rate region of Example \ref{ex:K=3L=1} (with $K=3$ and $L=1$) when $ Y_3 \prec Y_1$ (the last inequality in \eqref{capacityK=3L=1} becomes redundant under this condition) is the capacity region under that condition. This result was first obtained in \cite[Proposition 7]{nair2009capacity}. Corollary \ref{Corollary_OneIndirect} can thus be seen as one extension of \cite[Proposition 7]{nair2009capacity}. Note also that unlike the converse proof of \cite[Proposition 7]{nair2009capacity} which uses the Csiszar sum lemma, the converse proof of Corollary \ref{Corollary_OneIndirect} uses the Csiszar sum lemma {\em and} the information inequality, which together allows us to prove the more general result of Corollary \ref{Corollary_OneIndirect}.
\end{remark}

\begin{remark}
The authors in \cite[Theorem 1]{nair2009capacity} showed that indirect decoding is unnecessary in the $K{=}3,L{=}1$ case of Example \ref{ex:K=3L=1}. This raises the question of whether or not the achievable rate region of Corollary \ref{Corollary_OneIndirect} remains unchanged if the common receiver $Y_{L+1}$ employed unique decoding. This question is answered in the affirmative by Proposition \ref{Prop_No_Need_IndirectDecoding} in Appendix \ref{app:nudnec?}. Note that in \cite{chong2008han}, it was shown that the region obtained by non-unique decoding turned out to be equivalent to that of unique decoding in \cite{han1981new} in the context of the two-user interference channel. The result of Proposition \ref{Prop_No_Need_IndirectDecoding} amplifies somewhat the main message of \cite{bidokhti2014non}, which details other instances where indirect (or non-unique) decoding does not lead to a strict enlargement of the rate region over that with unique decoding. 
\end{remark}




We noted in the $l=3$ case of Example \ref{Ex:K=4L=1-manyls} in which $K=4$ and $L=1$ that under the restrictions of Theorem \ref{Th_Group_InDirect_Decoding} that the capacity region reduces to that of a three-receiver DM BC with appropriately chosen receivers. The next corollary shows this aspect of Theorem \ref{Th_Group_InDirect_Decoding} for general $K$ and $L$.

\begin{corollary}
\label{Corollary_l_Equals_K-1}
For the class of DM BCs with two nested multicast messages defined by the restrictions that $\exists j \in \{L+1,\cdots, K-1\}$ such that $Y_j \prec Y_c \enspace \forall c  \in \{L+1,\cdots, K-1\}\backslash \{j\}$ and $ \exists r \in \mathcal{S}$ such that $Y_r \preceq Y_s$ $ \forall s  {\in} \mathcal{S}\backslash \{r\} $ and $Y_K\prec Y_r$, the union of rate pairs ($R_0,R_1$) satisfying 
\begin{align}
&R_0\leq  \min \{I(U_{K-1} ; Y_{j}), I(U_K;Y_K)\} \nonumber \\
&R_1\leq I(X;Y_r|U_K)  \nonumber \\
&R_0{+}R_1\leq I(X;Y_r|U_{K-1})+I(U_{K-1};Y_j) \label{Region_All_decode_Directly2} 
\end{align} for some $U_K \markov U_{K-1} \markov X$ is the capacity region.   
\end{corollary}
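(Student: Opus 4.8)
The plan is to obtain this corollary as a direct specialization of Theorem \ref{Th_Group_InDirect_Decoding} to the choice $l = K-1$, followed by discarding one sum-rate inequality that becomes redundant under the hypothesis $Y_K \prec Y_r$. First I would set $l = K-1$ in Theorem \ref{Th_Group_InDirect_Decoding}, which partitions the common receivers into $\mathcal{C}_1 = \{L+1,\ldots,K-1\}$ and the singleton $\mathcal{C}_2 = \{K\}$. With this choice, the three structural hypotheses of the theorem collapse to exactly those of the corollary: the existence of a $j \in \mathcal{C}_1$ with $Y_j \prec Y_c$ for all $c \in \{L+1,\ldots,K-1\}\backslash\{j\}$; the existence of an $r \in \mathcal{S}$ with $Y_r \preceq Y_s$ for all $s \in \mathcal{S}\backslash\{r\}$; and $Y_{c_2} \prec Y_r$ for the lone index $c_2 = K$, i.e.\ $Y_K \prec Y_r$. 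Substituting $U_l = U_{K-1}$ and $\mathcal{C}_2 = \{K\}$ into \eqref{Region_prop_group_indirect_decoding_spec}, the region of Theorem \ref{Th_Group_InDirect_Decoding} reads exactly as \eqref{Region_All_decode_Directly2} together with the trailing constraint $R_0 + R_1 \leq I(X;Y_r)$.

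The only substantive step is then to verify that this trailing inequality is redundant, so that it may be dropped to recover the three-inequality description in \eqref{Region_All_decode_Directly2}. I would argue this directly from the first two constraints. Any pair in the region satisfies $R_0 \leq I(U_K;Y_K)$ (one of the two terms in the first minimum) and $R_1 \leq I(X;Y_r|U_K)$, hence $R_0 + R_1 \leq I(U_K;Y_K) + I(X;Y_r|U_K)$. Since the coding distribution obeys $U_K \markov U_{K-1} \markov X$, we have the Markov chain $U_K \markov X \markov (Y_r,Y_K)$, so the less-noisy hypothesis $Y_K \prec Y_r$ (equivalently $Y_r \succ Y_K$) gives $I(U_K;Y_K) \leq I(U_K;Y_r)$ by Definition \ref{Def_Less_Noisy}. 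Therefore $R_0 + R_1 \leq I(U_K;Y_r) + I(X;Y_r|U_K) = I(X;Y_r)$ by the chain rule, showing that every point meeting the first two constraints automatically meets $R_0 + R_1 \leq I(X;Y_r)$. The last inequality is thus redundant and can be removed.

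With the redundant bound eliminated, both directions are inherited wholesale from Theorem \ref{Th_Group_InDirect_Decoding}: achievability follows from the two-split superposition/indirect-decoding scheme of Corollary \ref{Corollary_2_groups} specialized to $l = K-1$ (with the remaining inequalities of \eqref{Region_prop_group_indirect_decoding} that are redundant for this channel class also discarded), and the converse uses the identifications $U_{K,i} = M_0 Y_{r,1}^{i-1}$ and $U_{K-1,i} = M_0 Y_{r,1}^{i-1} Y_{j,i+1}^n$ together with the information inequality and the Csiszar sum lemma, exactly as in the proof of that theorem. I do not expect a genuine obstacle; the only points needing care are checking that the theorem's hypotheses degenerate correctly when $\mathcal{C}_2$ shrinks to a single receiver and confirming that $Y_K \prec Y_r$ is precisely what forces the redundancy of the last sum-rate bound. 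As a byproduct, the resulting region depends only on the three receivers $Y_j$, $Y_K$, and $Y_r$, which makes explicit the reduction to an effective three-receiver DM BC anticipated in the discussion preceding the corollary.
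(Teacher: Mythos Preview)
Your proposal is correct and follows the paper's primary route: specialize Theorem \ref{Th_Group_InDirect_Decoding} with $l=K-1$, then drop the trailing sum-rate bound $R_0+R_1\leq I(X;Y_r)$ as redundant under $Y_K\prec Y_r$; you are in fact more explicit than the paper about this redundancy step (the paper only spells out the analogous argument in Corollary \ref{Corollary_OneIndirect}). The paper additionally sketches an alternative direct proof by restricting attention to the embedded three-receiver DM BC $(Y_r,Y_j,Y_K)$ and invoking the Nair--El Gamal result together with the monotonicity of capacity under adding receivers, which you mention only as a closing remark rather than as an independent argument.
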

\begin{proof} The proof follows from Theorem \ref{Th_Group_InDirect_Decoding} by setting $l=K-1$. A direct proof of this result is to consider the embedded three-receiver DM BC with the critical receivers $Y_r, Y_j $ and $Y_K$ (for which the conditions of the corollary on the channels implies a single condition $Y_K\prec Y_r$). Under this restriction the region of \eqref{Region_All_decode_Directly2} is the capacity region of this three-receiver DM BC. Since adding receivers can't enlarge the capacity region and since the region \eqref{Region_All_decode_Directly2} is also achievable in the $K$-receiver case, it must be the capacity region for the the $K$-receiver problem. 
\end{proof}

\begin{remark}
Note that Corollary \ref{Corollary_l_Equals_K-1} gives a region that is equivalent to that in \cite[Proposition 7]{nair2009capacity} for the three critical receivers ($Y_K,Y_j,Y_r$). The rationale behind this is as follows: when $l= K-1$, only receiver $Y_K$ decodes $M_0$ directly from $U_K$. The rest of common receivers $Y_c$ where $c  \in \{L+1,\cdots, K-1\}$ decode  $M_0$ indirectly from $U_{K-1}$. Now, the assumed restriction that receiver $Y_j$ is more noisy than each of these other common receivers means that if receiver $Y_j$ can decode the common message successfully, so can all those other common receivers. A similar argument can also applied to the private receivers since there is one private receiver that is less capable than all other private receivers.  
\end{remark}

Theorem \ref{Th_Group_InDirect_Decoding} can thus be seen as unifying and generalizing (for general $K$ and $L$) the important capacity results of Korner and Marton \cite{korner1977images} for $K=2$ and $L=1$ and of Nair and El Gamal \cite{nair2009capacity} for $K=3$ and $L=1$. As stated previously, the converse proof of Theorem \ref{Th_Group_InDirect_Decoding} uses both the Csiszar sum inequality and the information inequality. Can these inequalities be used to further expand the classes of channels for which capacity can be found by using the greater generality of Theorem \ref{TH_genral_case} which involves more than two rate splits for the private message?

\subsection{More Than Two Rate Splits: A Discussion}
\label{geq3rs}

In this section, we consider the $K=4$ and $L=1$ example and the rate region achievable according to Theorem \ref{TH_genral_case} given in \eqref{exk=4l=1}. That rate region involves splitting the private message into three sub-messages. We highlight the technical difficulty of proving a converse result for this region next.

Given that the capacity region for $K=3,L=1$ is only known when $Y_1\succ Y_3$ in \cite{nair2009capacity}, we expect that at least one less noisy condition is required to establish the converse proof for the region in \eqref{exk=4l=1} for $K=4,L=1$. Thus, let us assume that $Y_1\succ Y_4$. The last inequality in \eqref{exk=4l=1} is redundant in this case. By using the information inequality as in Case (i) of Theorem \ref{Th_Ext_KM_region}, we can prove the converse for the first part of the first inequality (i.e., $ R_0 {\leq} \min \{I(U_4;Y_4) $) as well as the second inequality in \eqref{exk=4l=1} by letting $U_{4i}=M_0,Y_{1,1}^{i-1}$. 
Next, since the Csiszar sum lemma does not require any restrictions on the channel, we can use it to prove the converse for the rest of the first inequality and the third inequality in \eqref{exk=4l=1} where 
the ``right" choice of the auxiliary random variables would be $U_{2i}=M_0Y_{1,1}^{i-1}Y_{2,i+1}^n$ and $U_{3i}=M_0Y_{1,1}^{i-1}Y_{3,i+1}^n$. This choice does not however match the Markov structure of the auxiliary random variables since we do not assume that $Y_3$ is a degraded version of $Y_2$. Moreover, there is no useful generalization of Csiszar sum lemma for $K>2$ that could be used to choose $U_{2i}=M_0Y_{1,1}^{i-1}Y_{2,i+1}^nY_{3,i+1}^n$, for instance, that satisfies the Markov structure of the auxiliary random variables. Due to this limitation, we have to add one more less noisy condition ($Y_1\succ Y_3$) that then allows us to use the information inequality instead. Hence, having the conditions $Y_1\succ Y_4$ and $Y_1\succ Y_3$ in hand, we can prove the converse for $K=4,L=1$ using a combination of the information inequality and Csiszar sum lemma. More precisely, the Csiszar sum lemma is used to choose $U_{2i}=M_0Y_{1,1}^{i-1}Y_{2,i+1}^n$ and the information inequality is used to choose $U_{3i}=M_0Y_{1,1}^{i-1}$ (given $Y_3 \prec Y_1$) and $U_{4i}=M_0Y_{1,1}^{i-1}$ (given $Y_4 \prec Y_1$). Note hence that $U_{3i}=U_{4i}$. This means that when we assume $Y_1\succ Y_4$ and $Y_1\succ Y_3$ we only need two auxiliary random variables instead of three, so that having only two split rates instead of three suffice to prove the converse which gets us back to Theorem \ref{Th_Group_InDirect_Decoding}. We finally note that the region in \eqref{exk=4l=1} for $K=4,L=1$ may not be optimal for channels with just a single restriction $Y_1\succ Y_4$, especially given the limited form of rate-splitting considered in Theorem \ref{TH_genral_case}. A more general rate splitting strategy as in \cite{Salman2018Ach} may have to be considered to find capacity for a class of channels less restrictive than the ones that satisfy $Y_1\succ Y_4$ and $Y_1\succ Y_3$. Such problems are of future interest.


\section{Conclusion}
\label{Sec_Conclusion}

$K$-user DM BCs with two nested multicast messages were studied in which one message is to be sent to all receivers and another to a subset of $L$ private receivers. The capacity region  was established for several classes of channels. We showed that a natural extension of Korner-Marton region for $K$-user DM BC, which is achievable by superposition coding with successive or joint decoding at the private receivers, is capacity optimal for a larger class of channels than previously found for $K \geq 3$ in the two cases in which there is either one private receiver or one common receiver. For the general case of $L$ private receivers, we obtain new classes of DM BCs for which the natural extension of Korner-Marton region with superposition coding is capacity optimal. Moreover, we obtain a new inner bound in explicit form that uses rate splitting, superposition coding and indirect decoding. The general form of rate-splitting we consider splits the private message into as many parts as there are common receivers. Particular specializations of it in which the private message is split into just two sub-messages and some subset of common receivers employ indirect decoding is shown to be optimal for several new classes of DM BCs. 



\begin{appendices}
\section{Proof of the converse for Theorem \ref{Th_Ext_KM_region}}
   \label{Appendix_Proof_Th1}

For Case (i) and Case (ii), the converse proof is based on the information inequality and is given next. In particular, we show that for every sequence of $(2^{nR_0}, 2^{nR_1}, n)$ codes with $\lim_{n \rightarrow \infty} P_e^{(n)} = 0 $ the inequalities in \eqref{2ndineqthm1i} or the inequalities in \eqref{2ndineqthm1ii} hold for some $p(u,x)$ for which $U \markov X \markov (Y_1,Y_2, \cdots , Y_K)$.

Consider case (i), i.e., the class of channels for which $\exists $ a $r \in \mathcal{S}$ such that $Y_r \preceq Y_s$ $ \forall s\in \mathcal{S}\backslash \{r\}$ and $Y_c \prec  Y_r$ $ \forall c\in \mathcal{C}$.
For the first inequality in \eqref{2ndineqthm1i} in Case (i), for each $ c\in \mathcal{C} $, we have
\begin{align}
nR_0&= H(M_0) \nonumber \\
& \leq I(M_0;Y_c^n) + n \epsilon_n  \label{fano1}\\
& = \sum_{i=1}^n I(M_{0};Y_{c,i}|Y_{c,1}^{i-1}) + n \epsilon_n \label{chainrule1} \\
& \leq \sum_{i=1}^n I(M_{0},Y_{c,1}^{i-1};Y_{c,i}) + n \epsilon_n \label{chainrule2}  \\ 
&\leq \sum_{i=1}^n I(M_{0},Y_{r,1}^{i-1};Y_{c,i}) + n \epsilon_n  \label{Y_1_LessNoisy_Lemma}\\ 
&=  \sum_{i=1}^n I(U_i;Y_{c,i}) + n \epsilon_n \label{auxiliary}
\end{align}
where (\ref{fano1}) follows from Fano's inequality (with  $\lim_{n \rightarrow \infty} \epsilon_{n} = 0 $), \eqref{chainrule1} and \eqref{chainrule2} from chain rule and non-negativity of mutual information, \eqref{Y_1_LessNoisy_Lemma} follows from an information inequality of Lemma \ref{Lemma_Information_Inequality} because $Y_c \prec Y_r$, and \eqref{auxiliary} by defining $U_i{=}M_{0},Y_{r,1}^{i-1}$ which satisfies $U_i \markov X_i \markov (Y_{r,i},Y_{c,i})$.

The rest of the proof proceeds along standard lines. Define a time-sharing uniform random variable $Q$ over $[1{:}n]$ that is independent of all other involved random variables. Identify $U{=}(U_Q,Q)$ and $Y_{c}{=}Y_{cQ}$ and take the limit as $n \rightarrow \infty$, so that $\epsilon_n \rightarrow 0$, to get the first inequality in \eqref{2ndineqthm1i}.

To prove the second inequality in \eqref{2ndineqthm1i}, we have 
\begin{align}
nR_1&= H(M_1) \nonumber \\
& \leq I(M_1;M_0, Y_r^n)+ n \epsilon_n  \label{fano-again} \\
& = \sum_{i=1}^n I(M_{1};Y_{r,i}|M_{0} Y_{r,1}^{i-1}) + n \epsilon_n  \label{indm0m1}  \\
&\leq \sum_{i=1}^n I(X_i;Y_{r,i}|U_i) + n \epsilon_n \label{useui} \\
& = n I(X_Q;Y_{r,Q}|U_Q, Q) + n \epsilon_n \label{atlast}
\end{align}
where 
\eqref{fano-again} follows from Fano's inequality, \eqref{indm0m1} from the independence of $M_0$ and $M_1$ and the chain rule for mutual information, \eqref{useui} from the definition of $U_i$, and \eqref{atlast} by introducing the time-sharing random variable. Identify $ X=X_Q,Y_{r}=Y_{r,Q}$ so that $U\markov X\markov (Y_r,Y_c)$ and take the limit as $n \rightarrow \infty$, to obtain \eqref{2ndineqthm1i}.

Consider Case (ii), i.e., the class of channels for which $\exists $ a $ j \in \mathcal{C}$ such that $Y_j \prec  Y_c$ $ \forall c \in \mathcal{C}\backslash \{j\}$ and $Y_j \prec  Y_s$  $ \forall s \in \mathcal{S}$. For the first inequality in \eqref{2ndineqthm1ii}, follow the steps leading to \eqref{chainrule2} to obtain
\begin{align}
nR_0&=H(M_0) \nonumber \\
      & \leq \sum_{i=1}^n I(M_{0},Y_{j,1}^{i-1};Y_{j,i}) + n \epsilon_n  \label{ChainRule_R0_2msg}  \\
      & = \sum_{i=1}^n I(U_i;Y_{j,i}) + n \epsilon_n  \label{auxiliary2}
\end{align}
where 
(\ref{auxiliary2}) follows by defining $U_{i}{=}M_{0},Y_{j,1}^{i-1}$, which satisfies $U_i\markov X_i \markov (Y_{s,i},Y_{j,i})$. 

To obtain the second inequality in \eqref{2ndineqthm1ii}, note that for any $ s\in \mathcal{S} $, following the steps leading 
to \eqref{indm0m1}, we have the first inequality below:
\begin{align}
nR_1 & \leq \sum_{i=1}^n I(M_{1};Y_{s,i}|M_{0} Y_{s,1}^{i-1}) + n \epsilon_n  \label{chainrule5}  \\
     & {=} \sum_{i=1}^n I(M_1,Y_{s,1}^{i-1};Y_{s,i}|M_{0})  {-}I(Y_{s,1}^{i-1};Y_{s,i}|M_{0}) {+}  n \epsilon_n \label{chainrule6} \\
     &{\leq}  \sum_{i=1}^n I(X_i;Y_{s,i}|M_{0} )  {-}I(Y_{s,1}^{i-1};Y_{s,i}|M_{0}) + n \epsilon_n \label{Markov_chain_Xi} \\
     & {\leq} \sum_{i=1}^n I(X_i;Y_{s,i}|M_{0} ){-}I(Y_{j,1}^{i-1};Y_{s,i}|M_{0}) + n \epsilon_n \label{Lemma_R_0_R_1} \\
     & {\leq}  \sum_{i=1}^n I(X_i;Y_{s,i}|M_{0}, Y_{j,1}^{i-1}) + n \epsilon_n \label{Markov_chain_Xi2}\\     
     &{=}  \sum_{i=1}^n I(X_i;Y_{s,i}|U_{i}) + n \epsilon_n \label{auxiliary3}
\end{align}
where (\ref{chainrule6}) follows from the chain rule for mutual information, \eqref{Markov_chain_Xi} and \eqref{Markov_chain_Xi2} from data processing inequality and that $(M_0,M_1,Y_{s,1}^{i-1},Y_{j,1}^{i-1}) {\markov} X_i{\markov} Y_{s,i} $. The inequality (\ref{Lemma_R_0_R_1}) follows from an information inequality of Lemma \ref{Lemma_Information_Inequality}, since $Y_j \prec Y_s$, and (\ref{auxiliary3}) from the definition of $U_i$. To complete the proof, we adopt the steps from \eqref{useui} to \eqref{atlast}, i.e., define a time-sharing uniform random variable $Q$ over $[1:n]$ that is independent of all other random variables, and identify $U=(U_Q,Q), X=X_Q, Y_s=Y_{s,Q} $, and $Y_j=Y_{j,Q}$ (so that $U \markov X \markov Y_s$) to obtain \eqref{2ndineqthm1ii}.

\section{Proof of Theorem 
\ref{TH_genral_case}}
   \label{Appendix_Proof_general}
The following lemmas are used later in the proof of Theorem \ref{TH_genral_case}. They may also be of independent interest.

\begin{lemma}
\label{Lemma_FME_R1k_R1l}
Consider the following set of inequalities for  $R_0$, $R_{1}$, and any integers $ 1\leq k\leq K-1$ and $K\geq 2$
\begin{align}
 R_{0}{+} \sum_{j{=}c}^{K}R_{1j}&\leq  I(U_{c};Y_{c}) \enspace \enspace c\in\{k,...,K\} \label{Eq_structure0_1}\\
 R_{1} {-} \sum_{j{=}c}^{K}R_{1j}&\leq  I(U;Y|U_{c})\enspace \enspace c\in\{k,...,K\}\label{Eq_structure0_2}\\
 -R_1+\sum_{j=k}^K R_{1j}&\leq 0 \label{Eq_structure0_3}\\
 -R_{1c}&\leq 0 \enspace \enspace c\in\{k,...,K\} \label{Eq_structure0_4}
\end{align} 
where $U_K\markov U_{K-1}\markov \cdots \markov U_{k} \markov U\markov (Y_K,Y_{K-1}...,Y_{k},Y)$ forms a Markov chain. Let $ k\leq l \leq K-1 $. By projecting away the $l-k+1$ ``split rate" variables ($R_{1k},R_{1k+1},...,R_{1l}$), we get the following polytope
\begin{align}
&R_0+R_1  \leq I(U;Y|U_c)+I(U_c;Y_c) \enspace c\in \{k,k+1,..,l\} \label{Eq_structure5_1}\\
& R_0+\sum_{j=l{+}1}^K R_{1j}\leq I(U_c;Y_c) \enspace c\in \{k,k+1,..,l\}\label{Eq_structure5_2}\\
&R_{0}{+} \sum_{j{=}c}^{K}R_{1j}\leq  I(U_{c};Y_{c}) \enspace \enspace c\in\{l+1,l+2.,..,K\} \label{Eq_structure5_3}\\
&R_{1} {-} \sum_{j{=}c}^{K}R_{1j}\leq  I(U;Y|U_{c})\enspace \enspace c\in\{l+1,l+2,...,K\}\label{Eq_structure5_4}\\
&-R_1+\sum_{j=l{+}1}^K R_{1j}\leq 0 \label{Eq_structure5_5}\\
&-R_{1c}\leq 0 \enspace \enspace c\in\{l{+}1,...,K\} \label{Eq_structure5_6}
\end{align}
\begin{proof} The proof uses mathematical induction and the FME method to sequentially eliminate $R_{1j}$, $j\in \{k,k+1,..,l\}$ and is given in Appendix \ref{Appendix_proof_FME_structure}. 
\end{proof} 
\end{lemma}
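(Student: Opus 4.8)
The plan is to induct on the number of eliminated split rates, removing one variable per step by a single round of Fourier--Motzkin elimination (FME). For $m\in\{k,\dots,l\}$ let $P_m$ be the polytope obtained from \eqref{Eq_structure0_1}--\eqref{Eq_structure0_4} by projecting away $R_{1k},\dots,R_{1m}$; I would prove that $P_m$ is precisely \eqref{Eq_structure5_1}--\eqref{Eq_structure5_6} with $l$ replaced by $m$, so that the lemma is the case $m=l$. The engine throughout is the observation that, in all four inequality families, a given $R_{1c}$ enters only through the suffix sums $\sum_{j=c'}^{K}R_{1j}$ with $c'\le c$, so the bookkeeping of ``which constraints contain the variable being eliminated'' is completely explicit.

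\textbf{Base case $m=k$.} Here $R_{1k}$ occurs only in the $c=k$ instances of \eqref{Eq_structure0_1}, \eqref{Eq_structure0_2}, \eqref{Eq_structure0_4} together with \eqref{Eq_structure0_3}. I would read off the two upper bounds on $R_{1k}$ (from \eqref{Eq_structure0_1} and \eqref{Eq_structure0_3}) and the two lower bounds (from \eqref{Eq_structure0_2} and \eqref{Eq_structure0_4}), and form the four pairwise FME combinations. The pairing of \eqref{Eq_structure0_1} with \eqref{Eq_structure0_2} gives $R_0+R_1\le I(U;Y|U_k)+I(U_k;Y_k)$, i.e.\ \eqref{Eq_structure5_1}; \eqref{Eq_structure0_1} with \eqref{Eq_structure0_4} gives $R_0+\sum_{j=k+1}^{K}R_{1j}\le I(U_k;Y_k)$, i.e.\ \eqref{Eq_structure5_2}; \eqref{Eq_structure0_3} with \eqref{Eq_structure0_4} gives $-R_1+\sum_{j=k+1}^{K}R_{1j}\le 0$, i.e.\ \eqref{Eq_structure5_5}; and \eqref{Eq_structure0_3} with \eqref{Eq_structure0_2} gives only the vacuous $-I(U;Y|U_k)\le 0$, discarded by nonnegativity of mutual information. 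Adjoining the constraints that do not contain $R_{1k}$ (the $c\ge k+1$ instances of \eqref{Eq_structure0_1}, \eqref{Eq_structure0_2}, \eqref{Eq_structure0_4}, which become \eqref{Eq_structure5_3}, \eqref{Eq_structure5_4}, \eqref{Eq_structure5_6}) yields exactly $P_k$.

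\textbf{Inductive step and the main obstacle.} Assuming $P_m$ has the stated form, I would eliminate $R_{1,m+1}$, which now appears in \eqref{Eq_structure5_2} for $c\in\{k,\dots,m\}$, in the $c=m+1$ instances of \eqref{Eq_structure5_3}, \eqref{Eq_structure5_4}, \eqref{Eq_structure5_6}, and in \eqref{Eq_structure5_5}. Pairing the $c=m+1$ upper bound of \eqref{Eq_structure5_3} with \eqref{Eq_structure5_4} reproduces the new entry of \eqref{Eq_structure5_1}; pairing the upper bounds against $R_{1,m+1}\ge0$ from \eqref{Eq_structure5_6} reproduces \eqref{Eq_structure5_2} and \eqref{Eq_structure5_5} re-indexed by $l\mapsto m+1$; the constraints of \eqref{Eq_structure5_3}--\eqref{Eq_structure5_6} with index $\ge m+2$ carry over verbatim. \emph{The one nontrivial wrinkle}, and the step I expect to be the crux, is that pairing the $c\in\{k,\dots,m\}$ upper bounds of \eqref{Eq_structure5_2} with the lower bound \eqref{Eq_structure5_4} produces the extra inequalities $R_0+R_1\le I(U;Y|U_{m+1})+I(U_c;Y_c)$ for $c\le m$, which are absent from the target list. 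To discard them I would invoke the Markov structure: since $c\le m$, the chain contains the sub-chain $U_{m+1}\markov U_c\markov U\markov Y$, and expanding $I(U,U_c;Y|U_{m+1})$ by the chain rule in two ways (using $I(U;Y|U_c,U_{m+1})=I(U;Y|U_c)$ and $I(U_c;Y|U,U_{m+1})=0$) gives $I(U;Y|U_{m+1})=I(U;Y|U_c)+I(U_c;Y|U_{m+1})\ge I(U;Y|U_c)$. Hence each extra inequality is dominated by the already-present \eqref{Eq_structure5_1} for the same $c$ and may be dropped, so the FME output collapses exactly to \eqref{Eq_structure5_1}--\eqref{Eq_structure5_6} with $l=m+1$. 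The remaining verifications---matching the arithmetic of each pairing against the displayed right-hand sides and confirming the carried-over families---are routine.
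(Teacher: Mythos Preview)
Your proposal is correct and follows essentially the same approach as the paper: induction on the number of eliminated split rates, with one Fourier--Motzkin step per variable, carrying the unaffected constraints through and discarding the spurious $R_0+R_1\le I(U;Y|U_{m+1})+I(U_c;Y_c)$ combinations via the Markov-chain inequality $I(U;Y|U_{m+1})\ge I(U;Y|U_c)$ for $c\le m$. If anything, your chain-rule justification of that monotonicity is slightly more explicit than the paper's one-line appeal to the Markov structure.
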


\begin{remark}
In the above lemma, we start with a polytope that consists of $K-k+1$ split rate variables described in \eqref{Eq_structure0_1}-\eqref{Eq_structure0_4}. By projecting out the first $l-k+1$ of them we get the polytope in smaller dimension described in \eqref{Eq_structure5_1}-\eqref{Eq_structure5_6}. Note that the set of inequalities in \eqref{Eq_structure5_3}-\eqref{Eq_structure5_6} looks similar to that in \eqref{Eq_structure0_1}-\eqref{Eq_structure0_4} with the only difference being that the index $c$ belonging to $\{l+1,...,K\}$ instead of $\{k,....,K\}$. On the other hand, the inequalities in \eqref{Eq_structure5_1} and \eqref{Eq_structure5_2} emerge due to the projection.
\end{remark}

\begin{lemma}
\label{Lemma_FME_R1_R0}
Consider the polytope of admissible rate pair $(R_0, R_{1})$ described by \eqref{Eq_structure0_1}-\eqref{Eq_structure0_4} in Lemma \ref{Lemma_FME_R1k_R1l} but allowing $ k \leq l \leq K$ under the Markov chain relation stated therein.
Its projection onto the $(R_0,R_1)$ plane is the polygon described by the following inequalities
\begin{align}
&R_{0}{+} R_1 {\leq}I(U;Y|U_{c}) {+} I(U_{c};Y_{c})\nonumber\\
&R_{0}{\leq}  I(U_{c};Y_{c}) \enspace c\in \{k,..,K\}
\label{Region_After_projection}
\end{align}
\begin{proof} The proof is divided into two steps. 

From Lemma \ref{Lemma_FME_R1k_R1l}, by setting $l=K-1$, we get 
\begin{align}
&R_0{+}R_1  {\leq} I(U;Y|U_c){+}I(U_c;Y_c) \enspace c{\in} \{k,k{+}1,..,K{-}1\} \label{Eq_structure7_1}\\
& R_0+ R_{1K}\leq I(U_c;Y_c) \enspace c\in \{k,k+1,..,K-1\}\label{Eq_structure7_2}\\
&R_{0}{+} R_{1K}\leq  I(U_{K};Y_{K})  \label{Eq_structure7_3}\\
&R_{1} {-}R_{1K}\leq  I(U;Y|U_{K}) \label{Eq_structure7_4}\\
&-R_1+R_{1K}\leq 0 \label{Eq_structure7_5}\\
&-R_{1K}\leq 0  \label{Eq_structure7_6}
\end{align}
Next, using FME, we project out the last split rate $R_{1K}$.  

Note that the first set of inequalities \eqref{Eq_structure7_1} does not contain $R_{1K}$, so that it will still the same after projecting out $R_{1K}$. Using FME, from \eqref{Eq_structure7_2} and \eqref{Eq_structure7_4}, we get
\begin{equation}
\label{project_away_R1K_1}
R_0{+}R_1 {\leq} I(U;Y|U_K){+}I(U_c;Y_c)\enspace c{\in} \{k,k{+}1,..,K{-}1\}
\end{equation}
Given \eqref{Eq_structure7_1}, these inequalities are redundant since $I(U;Y|U_{K}) \geq I(U;Y|U_{c}) \enspace \forall c\in \{k,..,K-1\}$, a consequence of $U_{K}\markov U_{K-1}\markov \cdots \markov U_{k} \markov U\markov (Y_K,Y_{K-1},...,Y_k,Y)$.

Again, using FME, projecting out $R_{1K}$ from \eqref{Eq_structure7_2} and \eqref{Eq_structure7_6}, we get 
\begin{equation}
\label{project_away_R1K_2}
R_0\leq I(U_c;Y_c) \enspace c\in \{k,k+1,..,K-1\}.
\end{equation}
Finally, by projecting $R_{1K}$ from \eqref{Eq_structure7_3} and \eqref{Eq_structure7_4}, we have
\begin{equation}
\label{project_away_R1K_3}
R_0+R_1\leq I(U;Y|U_K)+I(U_K;Y_K).
\end{equation}
and from \eqref{Eq_structure7_3} and \eqref{Eq_structure7_6} to get
\begin{equation}
\label{project_away_R1K_4}
R_0 \leq I(U_K;Y_K).
\end{equation}
Now, \eqref{Eq_structure7_1}, \eqref{project_away_R1K_2}, \eqref{project_away_R1K_3} and \eqref{project_away_R1K_4} together constitute the region in \eqref{Region_After_projection}. 
\end{proof} 
\end{lemma}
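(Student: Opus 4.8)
The plan is to avoid running Fourier--Motzkin elimination on all $K-k+1$ split rates simultaneously and instead to reuse the result already established. First I would apply Lemma~\ref{Lemma_FME_R1k_R1l} with the choice $l=K-1$. A single application of that lemma projects away $R_{1k},\dots,R_{1,K-1}$ at once and leaves a polytope in the three variables $(R_0,R_1,R_{1K})$, namely the system \eqref{Eq_structure5_1}--\eqref{Eq_structure5_6} specialized to $l=K-1$. The advantage of this reduction is structural: the inequality family \eqref{Eq_structure5_1} is already free of the remaining split rate, whereas each of the other inequalities carries $R_{1K}$ with coefficient $+1$ or $-1$, so that one more elimination step suffices.

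Second, I would eliminate $R_{1K}$ by hand. Reading off the bounds, the upper bounds on $R_{1K}$ are $I(U_c;Y_c)-R_0$ for $c\le K-1$, the quantity $I(U_K;Y_K)-R_0$, and $R_1$, while the lower bounds are $R_1-I(U;Y|U_K)$ and $0$. Pairing each lower bound with each upper bound yields the sum-rate inequalities $R_0+R_1\le I(U;Y|U_K)+I(U_c;Y_c)$ for $c\le K$, the direct bounds $R_0\le I(U_c;Y_c)$ for $c\le K$, and two inequalities, $I(U;Y|U_K)\ge 0$ and $R_1\ge 0$, that say nothing beyond non-negativity. Together with the retained family \eqref{Eq_structure5_1} this is the full projected system before cleanup.

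The one substantive point, which I expect to be the main obstacle, is the redundancy argument that collapses this list to the clean same-index form of \eqref{Region_After_projection}. The elimination produces the ``cross'' inequalities $R_0+R_1\le I(U;Y|U_K)+I(U_c;Y_c)$ for $c\le K-1$, and I must show they are implied by \eqref{Eq_structure5_1}. This is exactly where the hypothesis is used: the Markov chain $U_K\markov U_{K-1}\markov\cdots\markov U_k\markov U$ forces $I(U;Y|U_K)\ge I(U;Y|U_c)$ for every $c\le K-1$ by the data-processing inequality, so each cross inequality is looser than the corresponding same-index inequality in \eqref{Eq_structure5_1} and may be discarded. The only surviving sum-rate constraint carrying index $K$ is then $R_0+R_1\le I(U;Y|U_K)+I(U_K;Y_K)$, and merging it with \eqref{Eq_structure5_1} gives $R_0+R_1\le I(U;Y|U_c)+I(U_c;Y_c)$ for all $c\in\{k,\dots,K\}$; together with $R_0\le I(U_c;Y_c)$ for all such $c$ this is precisely \eqref{Region_After_projection}.

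As an independent check, I would note that the induction can be bypassed by passing directly to the partial sums $S_c=\sum_{j=c}^{K}R_{1j}$. In these variables the constraints \eqref{Eq_structure0_1}--\eqref{Eq_structure0_4} demand exactly a non-increasing, non-negative sequence $S_k\ge S_{k+1}\ge\cdots\ge S_K\ge 0$ lying between the lower values $R_1-I(U;Y|U_c)$ and the upper values $I(U_c;Y_c)-R_0$ (the constraint \eqref{Eq_structure0_3} contributing only the trivial $R_1\ge 0$). Such a monotone interpolant exists if and only if the lower value at any index does not exceed the upper value at any index that is no larger, i.e.\ iff $R_0+R_1\le I(U;Y|U_{c_2})+I(U_{c_1};Y_{c_1})$ for all $c_1\le c_2$ together with $R_0\le I(U_{c_1};Y_{c_1})$. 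The same Markov monotonicity prunes these pairwise conditions back to \eqref{Region_After_projection}, confirming the projection independently of the Fourier--Motzkin bookkeeping.
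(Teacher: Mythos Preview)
Your proposal is correct and follows essentially the same route as the paper: apply Lemma~\ref{Lemma_FME_R1k_R1l} with $l=K-1$ to reduce to a three-variable system, eliminate $R_{1K}$ by hand, and use the Markov monotonicity $I(U;Y|U_K)\ge I(U;Y|U_c)$ to discard the cross sum-rate inequalities. Your treatment is in fact slightly more complete than the paper's, since you explicitly account for the pairings involving the upper bound $R_{1K}\le R_1$ from \eqref{Eq_structure7_5} (which the paper silently omits as trivial), and your partial-sum reformulation via $S_c=\sum_{j\ge c}R_{1j}$ is a clean independent check not present in the paper.
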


We now return to the proof of Theorem \ref{TH_genral_case}. The achievability scheme depends on superposition coding with rate splitting and indirect decoding. In particular, we split the private message $M_{1}$ into $K{-}L$ parts, with one part $M_{11}$ with rate $R_{11} $ to be decoded only by the private receivers and a part for every common receiver except $Y_K$, denoted $M_{1c}$, with rate $R_{1c}$ ($c {\in} \mathcal{C}\backslash \{K\}$). Fix a distribution $p(u_{K})\prod_{i{=}L{+}1}^{K{-}1}p(u_{i}|u_{i{+}1})p(x|u_{L{+}1})$. Generate $2^{nR_{0}}$ codewords $u_{K}^n(m_{0})$ according to $\prod_{i=1}^n p(u_{Ki})$. For each $u_{0}^n(m_{0})$, generate $2^{nR_{1K{-}1}}$ codewords  $u_{K{-}1}^n(m_{0}, m_{1K{-}1})$ according to $\prod_{i=1}^n p(u_{K{-}1i}|u_{Ki})$. Then, for every $u_c^n(m_{0},m_{1  K{-}1},...,m_{1  c})$ where $c {\in} \mathcal{C} \backslash \{K,L{+}1\} $, generate $2^{nR_{1 c{-}1}}$ codewords $u_{c{-}1}^n$ $(m_{0}, m_{1K{-}1},...,m_{1  c{-}1})$ according to $\prod_{i=1}^n p(u_{c{-}1i}|u_{ci})$. Finally, for every $u_{L{+}1}^n(m_{0}$ $, m_{1  K{-}1}$ $,...,m_{1  L{+}1})$, generate $2^{nR_{11}}$ codewords  $x^n$ $(m_{0}, m_{1})$ according to $\prod_{i=1}^n p(x_i|u_{L{+}1i})$. To send ($M_{0},M_{1}$), the sender expresses $M_{1}$ as ($M_{1 K{-}1},..,M_{1 L{+}1}$, $M_{1 1})$ and sends $x^n$ $(m_{0}$ $, m_{1 K{-}1}$ $,...,m_{1 L{+}1}$ $,m_{1 1}$). 

Receiver $Y_K$ finds $M_{0}$ if there is a unique index such that $u_K(m_{0})$ and $y_K^n$ are jointly typical. This happens with probability of correct decoding approaching unity as long as 
\begin{equation}
R_{0} {\leq} I(U_K;Y_K)
\label{Error_Simple_1}
\end{equation}The rest of the common receivers $Y_c$ ($c\in \{L{+}1,...,K{-}1\}$) find $M_{0}$ by indirectly decoding $U_c$. It is not difficult to show that this decoding succeeds with high probability as long as
\begin{equation}
 R_{0}{+} \sum_{j{=}c}^{K{-}1}R_{1j}{\leq}  I(U_{c};Y_{c}) \enspace c{\in} \{L{+}1,...,K{-}1\}
 \label{Error_common_Rec}
\end{equation} Finally, all private receivers find $M_0,M_1$ by decoding $X$. It can be shown that his happens successfully with high probability as long as the following inequalities hold:  
\begin{align}
\label{Error_simple_2}R_1\leq &I(X;Y_s|U_K)  \enspace s{\in} \mathcal{S} \\
\label{Error_private_Rec}  R_{1} {-} \sum_{j{=}c}^{K{-}1}R_{1j}\leq & I(X;Y_s|U_{c})  \enspace c{\in} \{L{+}1,...,K{-}1\},s{\in} \mathcal{S} \\ 
\label{Error_simple_3}R_{0} {+} R_{1} \leq & I(X;Y_s) \enspace s{\in} \mathcal{S}
\end{align} At this point, we need to eliminate the split rates from the sets of inequalities in (\ref{Error_common_Rec}) and (\ref{Error_private_Rec})
and the inequalities 
\begin{align}
&-R_1+\sum_{j=L+1}^{K-1} R_{1j}\leq 0 \label{Eq_sequence10_3}\\
&-R_{1c}\leq 0 \enspace \enspace c\in\{L+1,...,K-1\} \label{Eq_sequence10_4}
\end{align}
that guarantee that all split rates are positive to obtain the achievable region of Theorem \ref{TH_genral_case}.

The polytope described by (\ref{Error_common_Rec}), (\ref{Error_private_Rec}), \eqref{Eq_sequence10_3} and \eqref{Eq_sequence10_4} can be seen as an intersection of $L$  constituent polytopes, each corresponding to $s \in \mathcal{S}$. In order to obtain the projection onto the $(R_0, R_1)$ plane of that intersection, we apply Lemma \ref{Lemma_FME_R1_R0} to each constituent polytope and then take the intersection of the $L$ resulting polygons.

For each $s \in \mathcal{S}$, we project the polytope described by (\ref{Error_common_Rec}), (\ref{Error_private_Rec}), (\ref{Eq_sequence10_3}) and (\ref{Eq_sequence10_4}) onto the two-dimensional plane ($R_0,R_1$), using Lemma \ref{Lemma_FME_R1_R0} (by setting $U=X$, $Y=Y_s$ and $k=L+1$ therein) to obtain for each $ s\in \mathcal{S} $,
\begin{align}
&R_{0}{+} R_1 {\leq}I(X;Y_s|U_{c}) {+} I(U_{c};Y_{c})\nonumber\\
&R_{0}{\leq}  I(U_{c};Y_{c}) \enspace c\in \{L{+}1,L{+}2,...,K{-}1\}
\label{Region_After_projection_Final}
\end{align}
Thus, the above inequalities for all $ s\in \mathcal{S} $ together with \eqref{Error_Simple_1}, \eqref{Error_simple_2}, and \eqref{Error_simple_3} describe the achievability region in (\ref{Region_Genral_Case}), completing the proof of Theorem \ref{TH_genral_case}.

\section{Proof of Lemma  \ref{Lemma_FME_R1k_R1l}}
\label{Appendix_proof_FME_structure}    

We prove Lemma \ref{Lemma_FME_R1k_R1l} by mathematical induction. 

We begin by showing the initial case of $l=k$ using FME to project out $R_{1k}$ from the set of inequalities in \eqref{Eq_structure0_1}-\eqref{Eq_structure0_4}. We rewrite these inequalities to emphasize those that contain $R_{1k}$ as follows 
\begin{align}
&R_{0}{+}R_{1k}{+} \sum_{j{=}k+1}^{K}R_{1j}\leq  I(U_{k};Y_{k}) \label{Eq_structure13_0}\\
&R_{0}{+} \sum_{j{=}c}^{K}R_{1j}\leq  I(U_{c};Y_{c}) \enspace \enspace c\in\{k{+}1,...,K\} \label{Eq_structure13_1}\\
&R_{1} {-} R_{1k}{-}\sum_{j{=}k+1}^{K}R_{1j}\leq  I(U;Y|U_{k})\label{Eq_structure13_2}\\
&R_{1} {-} \sum_{j{=}c}^{K}R_{1j}\leq  I(U;Y|U_{c})\enspace \enspace c\in\{k{+}1,...,K\}\label{Eq_structure13_3}\\
&-R_1{+}R_{1k}{+}\sum_{j=k+1}^K R_{1j}\leq 0 \label{Eq_structure13_4}\\
& -R_{1k}\leq 0 \label{Eq_structure13_5} \\
& -R_{1c}\leq 0 \enspace \enspace c\in\{k+1,...,K\} \label{Eq_structure13_6}
\end{align} 
In the the above inequalities, only \eqref{Eq_structure13_0},\eqref{Eq_structure13_2}, \eqref{Eq_structure13_4}, and \eqref{Eq_structure13_5} contain $R_{1k}$. 

By eliminating $R_{1k}$ from \eqref{Eq_structure13_0} and \eqref{Eq_structure13_2}, we have 
\begin{equation}
\label{Eq:FME_R_1k_1}
R_0+R_1\leq I(U;Y|U_k)+I(U_k;Y_k)
\end{equation}
and from \eqref{Eq_structure13_0} and \eqref{Eq_structure13_5} we get 
\begin{equation}
\label{Eq:FME_R_1k_2}
R_{0}{+} \sum_{j{=}k+1}^{K}R_{1j}\leq  I(U_{k};Y_{k})
\end{equation}
Finally, we project away $R_{1k}$ from \eqref{Eq_structure13_4} and \eqref{Eq_structure13_5} to get
\begin{equation}
\label{Eq:FME_R_1k_3}
-R_1{+}\sum_{j=k+1}^K R_{1j}\leq 0
\end{equation}
Now consider \eqref{Eq:FME_R_1k_1}-\eqref{Eq:FME_R_1k_3} together with \eqref{Eq_structure13_1}, \eqref{Eq_structure13_3} and \eqref{Eq_structure13_6}. These inequalities together form the polytope defined by \eqref{Eq_structure0_1}-\eqref{Eq_structure0_4} with $l=k$. Thus the lemma is true for $l=k$.


Next, consider the induction hypothesis that the lemma is true for $l=i$ for some  $ K=1 \leq i \leq K-2 $. 
In other words, by projecting ways the split rates $R_{1k},..,R_{1i}$, we have
\begin{align}
&R_0{+}R_1  {\leq} I(U;Y|U_c){+}I(U_c;Y_c) \enspace c{\in} \{k,k{+}1,..,i\} \label{Eq_structure10_1}\\
& R_0{+}\sum_{j=i{+}1}^K R_{1j}\leq I(U_c;Y_c) \enspace c\in \{k,k{+}1,..,i\}\label{Eq_structure10_2}\\
&R_{0}{+} \sum_{j{=}c}^{K}R_{1j}\leq  I(U_{c};Y_{c}) \enspace \enspace c\in\{i{+}1,i{+}2.,..,K\} \label{Eq_structure10_3}\\
&R_{1} {-} \sum_{j{=}c}^{K}R_{1j}\leq  I(U;Y|U_{c})\enspace \enspace c\in\{i{+}1,i{+}2,...,K\}\label{Eq_structure10_4}\\
&-R_1{+}\sum_{j=i{+}1}^K R_{1j}\leq 0 \label{Eq_structure10_5}\\
&-R_{1c}\leq 0 \enspace \enspace c\in\{i{+}1,...,K\} \label{Eq_structure10_6}
\end{align}

Next, we show that eliminating $R_{1i+1}$ from the above polytope, we obtain a polytope that is given by Lemma \ref{Lemma_FME_R1k_R1l} with $l=i+1$, thus completing its proof. 

Let us rewrite the above sets of inequalities to illustrate the dependence on $R_{1i+1}$ as follows:

\begin{align}
&R_0+R_1  {\leq} I(U;Y|U_c){+}I(U_c;Y_c) \enspace c{\in} \{k,k{+}1,..,i\} \label{Eq_structure11_1}\\
& R_0{+}R_{1i+1}{+}\sum_{j=i{+}2}^K R_{1j}\leq I(U_c;Y_c) \enspace c{\in} \{k,k{+}1,..,i\}\label{Eq_structure11_2}\\
&R_{0}{+} \sum_{j{=}c}^{K}R_{1j}\leq  I(U_{c};Y_{c}) \enspace \enspace c\in\{i{+}2.,..,K\} \label{Eq_structure11_3}\\
&R_{0}{+} R_{1i+1}{+}\sum_{j{=}i+2}^{K}R_{1j}\leq  I(U_{i+1};Y_{i+1}) \label{Eq_structure11_4}\\
&R_{1} {-} \sum_{j{=}c}^{K}R_{1j}\leq  I(U;Y|U_{c})\enspace \enspace c\in\{i{+}2,...,K\}\label{Eq_structure11_5}\\
&R_{1} {-} R_{1i+1}{-}\sum_{j{=}i+2}^{K}R_{1j}\leq  I(U;Y|U_{i+1})\label{Eq_structure11_6}\\
&-R_1{+}R_{1i+1}{+}\sum_{j=i{+}2}^K R_{1j}\leq 0 \label{Eq_structure11_7}\\
&-R_{1c}\leq 0 \enspace \enspace c\in\{i{+}2,...,K\} \label{Eq_structure11_8}\\
&-R_{1i+1}\leq 0 \label{Eq_structure11_9}
\end{align}
The inequalities \eqref{Eq_structure11_2}, \eqref{Eq_structure11_4} and \eqref{Eq_structure11_7}, give upper bounds on $R_{1i+1}$ while  \eqref{Eq_structure11_6} and \eqref{Eq_structure11_9} give lower bounds on $R_{1i+1}$. 

By eliminating $R_{1i+1}$ from \eqref{Eq_structure11_2} and \eqref{Eq_structure11_6}, we have 
\begin{equation}
\label{Eq_FME1}
R_0{+}R_1{\leq} I(U;Y|U_{i+1}){+}I(U_c;Y_c) \enspace c{\in}\{k,k+1,..,i\}
\end{equation} 
From \eqref{Eq_structure11_1}, all the inequalities in (\ref{Eq_FME1}) are redundant since $I(U;Y|U_{i+1}) \geq I(U;Y|U_{c}) \enspace \forall c\in \{k,k+1,...,i\}$. This follows form the Markov chain $U_{K}\markov U_{K-1}\markov \cdots \markov U_{i+1}\markov U_{i}\markov \cdots \markov U_{k} \markov U\markov (Y_K,Y_{K-1},..,Y_{i+1},Y_i,..,Y_k,Y)$.

Then, by projecting $R_{1i+1}$ from \eqref{Eq_structure11_2} and \eqref{Eq_structure11_9}, we have 
\begin{equation}
\label{Eq_FME2}
R_0{+}\sum_{j=i{+}2}^K R_{1j}\leq I(U_c;Y_c) \enspace c\in \{k,k+1,..,i\}\end{equation}
From \eqref{Eq_structure11_4} and \eqref{Eq_structure11_6}, we also eliminate $R_{1i+1}$ to get
\begin{equation}
\label{Eq_FME3}
R_0{+}R_1\leq I(U;Y|U_{i+1})+I(U_{i+1};Y_{i+1})
\end{equation}
and from \eqref{Eq_structure11_4} and \eqref{Eq_structure11_9}, to get
\begin{equation}
\label{Eq_FME4}
R_0{+}\sum_{j{=}i+2}^{K}R_{1j}\leq I(U_{i+1};Y_{i+1})
\end{equation}
Finally, we eliminate $R_{1i+1}$ from \eqref{Eq_structure11_7} and \eqref{Eq_structure11_6} to get the redundant inequality $0 \leq I(U;Y|U_{i+1})$ and from \eqref{Eq_structure11_7} and \eqref{Eq_structure11_9} to get
\begin{equation}
\label{Eq_FME5}
-R_1{+}\sum_{j=i{+}2}^K R_{1j}\leq 0
\end{equation}

Hence, after projecting away the split rates $R_{1k},...,R_{1i},R_{i+1}$, we get the  polytope 
\begin{align}
&R_0{+}R_1  {\leq} I(U;Y|U_c){+}I(U_c;Y_c) \enspace c{\in} \{k,..,i+1\} \label{Eq_structure12_1}\\
& R_0{+}\sum_{j=i{+}2}^K R_{1j}\leq I(U_c;Y_c) \enspace c\in \{k,k+1,..,i+1\}\label{Eq_structure12_2}\\
&R_{0}{+} \sum_{j{=}c}^{K}R_{1j}\leq  I(U_{c};Y_{c}) \enspace \enspace c\in\{i+2,i+3.,..,K\} \label{Eq_structure12_3}\\
&R_{1} {-} \sum_{j{=}c}^{K}R_{1j}\leq  I(U;Y|U_{c})\enspace \enspace c\in\{i+2,i+3,...,K\}\label{Eq_structure12_4}\\
&-R_1{+}\sum_{j=i{+}2}^K R_{1j}\leq 0 \label{Eq_structure12_5}\\
&-R_{1c}\leq 0 \enspace \enspace c\in\{i{+}2,...,K\} \label{Eq_structure12_6}
\end{align}
which coincides with that described by (\eqref{Eq_structure5_1}-\eqref{Eq_structure5_6}) in Lemma \ref{Lemma_FME_R1k_R1l} for $l{=}i{+}1$, as was to be proved.  Hence, by the principle of mathematical induction, Lemma \ref{Lemma_FME_R1k_R1l} is true for any $l{\in}\{k,k+1,..,K{-}1\}$, thus concluding its proof.

\section{Is Non-Unique decoding Necessary?}
\label{app:nudnec?}
At this point, we consider the following question: is the improvement in Corollary \ref{Corollary_OneIndirect} over Case (i) in Theorem \ref{Th_Ext_KM_region} due to rate splitting, indirect decoding, or both? In the next proposition, we show that when indirect decoding is replaced with direct decoding there is no loss of achievable rate region. However, the role of rate splitting appears to be essential for Corollary \ref{Corollary_OneIndirect}.

\begin{proposition}
\label{Prop_No_Need_IndirectDecoding}
Indirect decoding is unnecessary in Corollary \ref{Corollary_OneIndirect}. In other words, the result of Corollary \ref{Corollary_OneIndirect} can also be obtained if the common receiver $Y_{L+1}$ uses joint (unique) decoding instead of indirect decoding. 
\end{proposition}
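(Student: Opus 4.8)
The plan is to compare, for each fixed coding distribution $p(u_K)p(u_{L+1}|u_K)p(x|u_{L+1})$, the per-distribution polytope obtained with indirect decoding at $Y_{L+1}$ against the one obtained when $Y_{L+1}$ jointly (uniquely) decodes the pair $(U_K,U_{L+1})$. Write $t:=R_{1,L+1}$ for the single split rate, so that $R_{11}=R_1-t$ with $0\le t\le R_1$. Redoing the error analysis at $Y_{L+1}$ under unique decoding introduces, in addition to the indirect-decoding bound $R_0+t\le I(U_{L+1};Y_{L+1})$, the extra upper bound $t\le I(U_{L+1};Y_{L+1}|U_K)$ on the split rate (the cost of resolving the satellite index inside the correct cloud). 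Projecting out the single variable $t$ by FME exactly as in Corollary \ref{Corollary_2_groups}, and using that $Y_r$ is the binding private receiver for this channel class (conditional more-capability gives $I(X;Y_s|\cdot)\ge I(X;Y_r|\cdot)$ for all $s\in\mathcal{S}$), I would show that the \emph{only} inequality the unique-decoding region $\mathcal{R}_{\mathrm{u}}(p)$ carries beyond the indirect-decoding region $\mathcal{R}_{\mathrm{i}}(p)$ is
\begin{equation}
R_1\le I(X;Y_r|U_{L+1})+I(U_{L+1};Y_{L+1}|U_K).\label{starineq}
\end{equation}
This bound arises solely from pairing the new upper bound with the private-receiver lower bound $t\ge R_1-I(X;Y_r|U_{L+1})$; every other FME pairing is either vacuous or reproduces the bounds \eqref{Region_All_decode_Directly} of Corollary \ref{Corollary_OneIndirect}.

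Since $\mathcal{R}_{\mathrm{u}}(p)\subseteq\mathcal{R}_{\mathrm{i}}(p)$ for every $p$, one inclusion of the two unions is immediate. For the reverse inclusion I would argue point by point. Fix a rate pair in the region of Corollary \ref{Corollary_OneIndirect}, achieved by some $p$ meeting \eqref{Region_All_decode_Directly}. If \eqref{starineq} already holds for $p$, then the pair lies in $\mathcal{R}_{\mathrm{u}}(p)$ and nothing more is needed. Otherwise, using the chain-rule identity $I(X;Y_r|U_K)=I(X;Y_r|U_{L+1})+I(U_{L+1};Y_r|U_K)$ together with the bound $R_1\le I(X;Y_r|U_K)$ of \eqref{Region_All_decode_Directly}, the failure of \eqref{starineq} forces the strict ordering $I(U_{L+1};Y_r|U_K)>I(U_{L+1};Y_{L+1}|U_K)$; since no order between $Y_{L+1}$ and $Y_r$ is assumed in this class, this regime genuinely occurs.

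The key step is then to exhibit a replacement distribution that still covers the pair while being compatible with unique decoding. I would take $p^{0}$ obtained by collapsing the middle layer, i.e.\ setting $U_{L+1}^{0}=U_K$ while keeping the same joint law of $(U_K,X)$ (so $p^{0}(x|u_K)$ is the $U_{L+1}$-averaged $p(x|u_K)$); this is a valid distribution with $U_K\markov U_{L+1}^{0}\markov X$. Under $p^{0}$ the bound $R_1\le I(X;Y_r|U_K)$ is unchanged, the bound \eqref{starineq} collapses to $R_1\le I(X;Y_r|U_K)$ and hence holds, and the sum-rate bound becomes $R_0+R_1\le I(X;Y_r|U_K)+I(U_K;Y_{L+1})$, which under the strict ordering above is \emph{looser} than the sum-rate bound for $p$ and so is still met. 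The only delicate constraint is the cloud bound $R_0\le I(U_K;Y_{L+1})$ of $p^{0}$, which is tighter than $R_0\le I(U_{L+1};Y_{L+1})$ for $p$; but subtracting the failed inequality \eqref{starineq} from the sum-rate bound $R_0+R_1\le I(X;Y_r|U_{L+1})+I(U_{L+1};Y_{L+1})$ and using $I(U_{L+1};Y_{L+1})-I(U_{L+1};Y_{L+1}|U_K)=I(U_K;Y_{L+1})$ yields $R_0<I(U_K;Y_{L+1})$, so this bound holds too. Hence the pair lies in $\mathcal{R}_{\mathrm{u}}(p^{0})$, the two unions coincide, and indirect decoding is unnecessary.

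I expect the main obstacle to be precisely this last case: handling the regime in which $Y_r$ extracts more of the satellite layer than $Y_{L+1}$ does. A naive per-distribution redundancy argument fails here, because \eqref{starineq} genuinely cuts $\mathcal{R}_{\mathrm{i}}(p)$ for some $p$; the resolution must instead move to a different distribution (the collapsed $p^{0}$) and rely on the two-term cancellation $I(U_{L+1};Y_{L+1})-I(U_{L+1};Y_{L+1}|U_K)=I(U_K;Y_{L+1})$ to show that no rate is lost. I would finally remark that this matches the spirit of \cite{bidokhti2014non}, where indirect (non-unique) decoding is shown not to strictly enlarge the rate region, and specializes to the $K{=}3,L{=}1$ observation of \cite{nair2009capacity}.
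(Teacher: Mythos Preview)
Your proof is correct and follows essentially the same route as the paper's: both isolate the single extra inequality $R_1\le I(X;Y_r|U_{L+1})+I(U_{L+1};Y_{L+1}|U_K)$ that unique decoding adds, and both dispose of it by collapsing the middle layer to $U_{L+1}=U_K$ (keeping the same $(U_K,X)$ marginal) in the troublesome regime, relying on the chain-rule identity $I(U_{L+1};Y_{L+1})-I(U_{L+1};Y_{L+1}|U_K)=I(U_K;Y_{L+1})$.

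The only organizational difference is the case split. The paper cases on a property of $U_K$ alone, namely whether $I(U_K;Y_{L+1})\le \min_{c\in\mathcal{C}\setminus\{L+1\}} I(U_K;Y_c)$, and analyzes the corner point of the indirect-decoding polytope to argue redundancy or to justify the collapse. You instead case directly on whether the extra inequality fails at the given rate pair, and then verify all constraints of $\mathcal{R}_{\mathrm u}(p^0)$ pointwise. Your organization is arguably cleaner: it avoids the corner-point bookkeeping and makes transparent exactly which bound forces the collapse and why the collapsed distribution still covers the pair (your subtraction step yielding $R_0<I(U_K;Y_{L+1})$ is precisely what the paper's Case~2 corner-point calculation is trying to capture). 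Substantively, however, the two arguments are the same.
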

\begin{proof}
If $Y_{L{+}1}$ uses joint decoding instead of indirect decoding, we get the following achievable rate region 
\begin{align*}
&R_0\leq \min \{I(U_K;Y_c),I(U_{L{+}1};Y_{L{+}1})\} \enspace \forall c{\in} \mathcal{C} \backslash \{L+1 \} \nonumber \\
&R_1\leq I(X;Y_r|U_K)\nonumber \\
&R_1\leq I(X;Y_r|U_{L{+}1}){+}I(U_{L{+}1};Y_{L{+}1}|U_K)\\
&R_0{+}R_1\leq I(X;Y_r|U_{L{+}1}){+}I(U_{L{+}1};Y_{L{+}1})
\end{align*}
for some $U_K{\markov} U_{L{+}1}{\markov} X$. We refer to the above region as the joint decoding region. Recall there $r$ is such that $Y_r \preceq Y_s$ for all $s\in \mathcal{S} \backslash \{r\} $. We show that the extra inequality (the third one) is redundant by optimizing the choice of $U_{L{+}1}$ given $U_K$. We have the following two cases.
\begin{enumerate}
\item $I(U_K;Y_c)\geq I(U_K;Y_{L{+}1}) \enspace  \forall c{\in} \{L+2,...,K\}$: In this case, for any $U_{L{+}1}$, $R_1$ at the corner point of the indirect region can be described as
\begin{align}
R_1^*&{=} \min\{I(X;Y_r|U_K),I(X;Y_r|U_{L{+}1}){+}\nonumber \\&I(U_{L{+}1};Y_{L{+}1}){-} \min \{I(U_K;Y_c),I(U_{L{+}1},Y_{L{+}1})\}\} \nonumber\\
&{=}I(X;Y_r|U_{L{+}1}){+}\min\{I(U_{L{+}1};Y_r|U_K), \nonumber \\
&I(U_{L{+}1};Y_{L{+}1}){-} \min \{I(U_K;Y_c),I(U_{L{+}1};Y_{L{+}1})\}\} \label{markov_chain11}\\ 
&{=}I(X;Y_r|U_{L{+}1}){+}\min\{I(U_{L{+}1};Y_r|U_K),\nonumber \\
&I(U_{L{+}1};Y_{L{+}1}|U_K){+}I(U_K;Y_{L{+}1}){-}\nonumber \\ 
&\min \{I(U_K;Y_c),I(U_{L{+}1};Y_{L{+}1})\}\} \label{markov_chain22}
\end{align} where (\ref{markov_chain11}) and (\ref{markov_chain22}) follows from the Markov chain $U_K\markov U_{L+1}\markov X$ and the chain rule of the mutual information. 
In this case, \begin{equation}
\min \{I(U_K;Y_c),I(U_{L{+}1};Y_{L{+}1})\}\leq I(U_K;Y_{L{+}1})\nonumber
\end{equation}
From the above inequality, it is clear that
\begin{equation}
R_1^*{\leq}I(X;Y_r|U_{L{+}1})+I(U_{L{+}1};Y_{L{+}1}|U_K)\nonumber
\end{equation}
Hence, the extra inequality in the joint decoding region is redundant for any $U_{L{+}1}$ given $U_K$.
\item $I(U_K;Y_c)\leq  I(U_K;Y_{L{+}1})$  for any $c{\in} C$ and $C$ is any subset $C \subseteq \{L+2,...,K\}$:  equivalently, we have at least one mutual information term $I(U_K;Y_c)\enspace c\in\{L+2,...,K\}$ is less than $I(U_K;Y_{L{+}1})$. Without loss of generality, we can assume that
\begin{align*}
\min \{I(U_K;Y_c)\}=I(U_K;Y_t) \leq  I(U_K;Y_{L{+}1}) \\ \text{where},
\enspace c\in\{L+2,...,K\}
\end{align*}
The index $t$ exists, since we assumed that we have at least one mutual information term that is less than $I(U_K;Y_{L{+}1}) $. For this case, $R_1$ at the corner point of the indirect region can be written as
\begin{align*}
R_1^*&{=}\min\{I(X;Y_r|U_K),I(X;Y_r|U_{L{+}1}){+}\\&I(U_{L{+}1};Y_{L{+}1}){-}I(U_K;Y_t)
\end{align*}
Hence, we need to choose $U_{L+1}$ given $U_K$ such that 
\begin{align*}
I(X;Y_r|U_K){\leq}I(X;Y_r|U_{L{+}1}){+}&I(U_{L{+}1};Y_{L{+}1}){-}\\&I(U_K;Y_t)
\end{align*}

The optimal choice is to set $U_{L{+}1}{=}U_K$ because $I(U_{L{+}1};Y_{L{+}1}){\geq}I(U_K;Y_t)$. For this choice, indirect decoding and joint decoding again yield the same region. 

\end{enumerate}
\end{proof}


\begin{remark}
The work in \cite{Romero-Varanasi:isit2017} provides a general achievable rate region for the $K$-user DM BC with a general message set (any subset of the set of the exponentially many possible messages, one for each subset of receivers). That inner bound is based on a general form of rate-splitting, superposition coding and partial interference (unique) decoding.
In particular, in \cite{Romero-Varanasi:isit2017}, it was shown that known inner bounds/capacity regions in 2-receiver and 3-receiver BC special cases based on rate-splitting and superposition coding are reproduced by the general inner bound therein. Corollary \ref{Corollary_OneIndirect} and Proposition \ref{Prop_No_Need_IndirectDecoding} imply that the general inner bound in \cite{Romero-Varanasi:isit2017} when specialized to the $K$-user DM BC with two nested multicast messages and the particular rate-splitting strategy of Corollary \ref{Corollary_OneIndirect} along with unique decoding as in Proposition \ref{Prop_No_Need_IndirectDecoding} coincides with the capacity region for the new class of DM BCs specified by Corollary \ref{Corollary_OneIndirect}.
\end{remark}

\end{appendices}

\bibliographystyle
{IEEEtran}
\bibliography{IEEEabrv,Cite}

\end{document}